  \newtheorem{theorem}{Theorem}
  \newcommand{\Pdisk}{basic disk}
\author{Yan~Wang,~\IEEEmembership{Student Member,~IEEE,}
        Xunrui~Yin,
        Xin~Wang,~\IEEEmembership{Member,~IEEE,}
\thanks{Y. Wang, and X. Wang are with the School of Computer Science, Fudan
University (e-mail: \{11110240029, xinw\}@fudan.edu.cn). X. Yin is with the Department of Computer Science, University of Calgary (e-mail: xunyin@ucalgary.ca). He was with Fudan University when the main work was done. \ Y. Wang is also with the School of Software, East China Jiao Tong University.}
}
\title{MDR Codes:  A New Class of RAID-6 Codes with Optimal Rebuilding and Encoding}
\begin{document}
\maketitle
\begin{abstract}
As storage systems grow in size, device failures happen more frequently than ever before.
Given the commodity nature of hard drives employed, a storage system needs to tolerate a certain number of disk failures while maintaining data integrity, and to recover lost data with minimal interference to normal disk I/O operations.
RAID-6, which can tolerate up to two disk failures with the minimum redundancy, is becoming widespread. However, traditional RAID-6 codes suffer from high disk I/O overhead during recovery.
In this paper, we propose a new family of RAID-6 codes, the Minimum Disk I/O Repairable (MDR) codes, which achieve the optimal disk I/O overhead for single failure recoveries. Moreover, we show that MDR codes can be encoded with the minimum number of bit-wise XOR operations.
Simulation results show that MDR codes help to save about half of disk
read operations than traditional RAID-6 codes, and thus can reduce the
recovery time by up to 40\%.
\end{abstract}
\begin{IEEEkeywords}
RAID-6 codes, disk I/O, encoding complexity, distributed storage systems, erasure codes.
\end{IEEEkeywords}

\section{Introduction}


To satisfy the storage demand of ``big data'' in data centers, distributed storage systems are typically constructed from a large number of commodity servers and hard drives. As the capacity grows, disk failures happen more frequently than ever before.
RAID-6 systems, which can tolerate two disk failures with the minimum redundancy, have been widely used.



Measurement studies in the literature suggest that single disk failures represent 98.08\% of recoveries \cite{rashmi2013solution}. When there is one or two disk failures, the system has to run at a reduced speed. Hence minimizing the time of single failure recovery is important. Since disk I/O time represents a dominant component in recovery time \cite{diskIO}, the most promising approach to improve the recovery performance is to reduce the amount of data read from each disk \cite{rethinkingOKhan}.

In its general specification, RAID-6 does not impose restrictions on the specific coding method. In fact, one may apply any maximum distance separable (MDS) codes that can tolerate 2 erasures, as exemplified by the rather popular MDS array codes with a row parity block on each row. Many such codes have been designed, such as EVENODD\cite{evenodd}, RDP\cite{RDP}, Liberation Codes\cite{liberationcodes}. The row parity block and data blocks stored in the same row are called a {\em row parity set}. If a single data disk fails, the conventional way of repair is to calculate each block by XORing the blocks remaining on the surviving disks in the row parity set.
\begin{figure}
\begin{center}
  \includegraphics[width=0.9\textwidth]{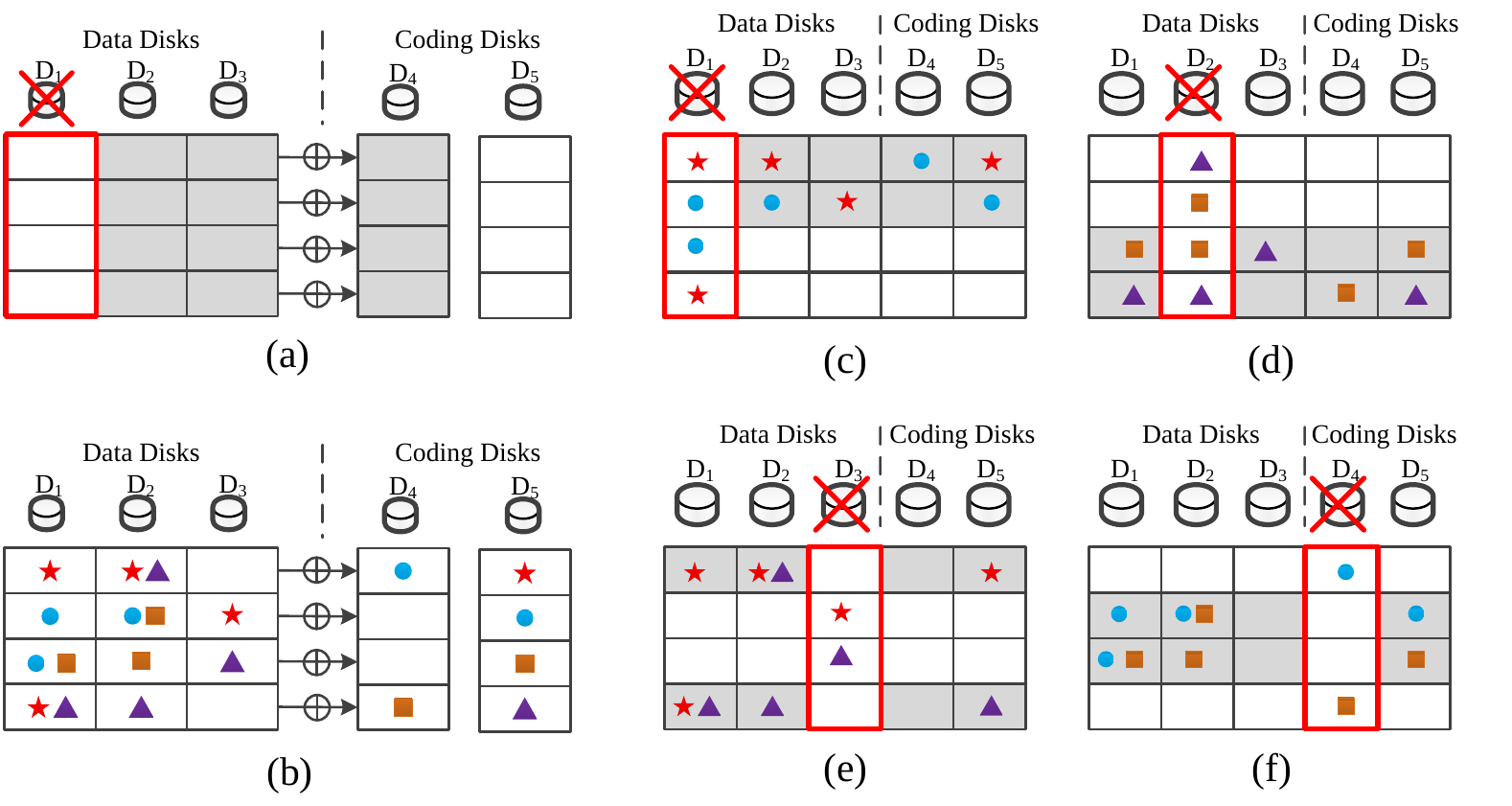}\\
  \caption{An example RAID-6 code with minimum disk I/O for a single failure recovery. When a data disk or the row parity disk fails, only half of the blocks in each surviving disks are read in the recovery.}\label{example}
\end{center}
\end{figure}



However, for the two existing RAID-6 codes, RDP and EVENODD, Xiang {\em et al.} and Wang {\em et al.} showed that if the other coding disk is used in the repair, the failed disk can be recovered by reading $3/4$ blocks from each surviving disk \cite{OptimalRDPXiang}\cite{rebuilding}\cite{hybridxiang2011}.
Furthermore, Tamo {\em et al.} and En Gad {\em et al.} showed that the repair disk I/O can be further reduced if the Q disk is designed carefully \cite{zigzag}\cite{repairGF2}.

Take Fig.~\ref{example} for example.  Disks $D_1$, $D_2$ and $D_3$ are the data disks, each holding 4 un-coded information blocks. Disk $D_4$ is called a P disk, which holds the row parity of data blocks. Disk $D_5$ is called a Q disk. Fig.~\ref{example}(a) shows the conventional way to repair $D_1$, which requires reading 12 blocks. Fig.~\ref{example}(b) shows a RAID-6 code. The blocks on the Q disk are calculated as the parity of the blocks with the same mark labeled in the figure. Fig.~\ref{example}(c)-(f) show the repair strategies for a single failure of each disk except the Q disk, where shaded blocks are read to repair the failed disk. For example, as shown in Fig.~\ref{example}(c), if disk $D_1$ fails, the first two rows of blocks on the surviving disks are read to memory, so that we can calculate the first two blocks of $D_1$ by row parities, and then the last two blocks by the parity sets marked with ``$\circ$'' and ``$\star$'', since all the other blocks in the two parity sets are known. As a result, only 8 blocks are read to repair $D_1$, saving $33.3\%$ disk I/O over conventional repair.

In this work, we study the problem of minimizing disk I/O for the repair of a single disk failure with MDS array codes constructed over $\mathbb{F}_2$, {\em i.e.,} all coded blocks can be generated with bit-wise XOR operations only. Our contributions include the following:

1) We prove exact lower bounds on the minimum disk I/O: at least $(k+1)r/2$ blocks must be read to recover a data disk or the P disk, and at least $kr$ blocks must be read to repair the Q disk, where $k$ is the number of data disks and $r$ is the number of rows in the array. Furthermore, we prove that in the repair of a data disk or the P disk, $r/2$ blocks of the failed disk must be repaired by row parity, in order to achieve the minimum disk I/O.


2) We develop an equivalent condition (Theorem \ref{thm:optimalRepair}) for the optimal repair in RAID-6 codes. With this condition, we find the example repair-optimal code shown in Fig.~\ref{example} and construct the MDR codes, which minimize the repair disk I/O not only for the data disks but also for the coding disks.

3) We show that MDR codes can be encoded with the minimum number of XOR operations. We achieve this by utilizing the intermediate result in the calculation of P disk. To our knowledge, MDR codes represent the first family of codes that minimize both repair disk I/O and computational overhead.

\vspace{2mm}

The rest of this paper is organized as follows. We review related literatures in section \ref{sec:relatedwork} and the specification of RAID-6 codes and basic notations in section \ref{sec:preliminaries}. In section \ref{sec:approach}, we propose a generator matrix approach for studying the minimum disk I/O problem. Along this approach, we prove lower bounds on the minimum disk I/O and develop the equivalent  condition for the optimal repair. In section \ref{sec:construction}, we propose the construction of MDR codes. In section \ref{sec:analysis}, we show how to minimize the computational overhead with MDR codes. We discuss the drawbacks of MDR codes in section \ref{sec:arraysize} and present the simulation results in section \ref{sec:simulation}. Section \ref{sec:conclusion} concludes this paper.

\section{Related Work}\label{sec:relatedwork}

In the design of RAID-6 codes, many researchers focus on minimizing the computational overhead of encoding, updating, and decoding. For example, the EVENODD codes \cite{evenodd} achieve near optimal computational complexity in both encoding and decoding, and the RDP codes \cite{RDP} further improve updating complexity. Plank proposed the Liberation codes \cite{liberationcodes} that are freely available and achieve either optimal or close to optimal in the encoding, updating, and decoding complexity.

Recently, reducing the repair disk I/O attracts more and more attentions. Studies on reducing disk I/O can be divided into two classes: one is to develop clever algorithms for existing RAID-6 codes, and the other is to design new RAID-6 codes.
For the former class, Xiang {\em et al.} \cite{OptimalRDPXiang}\cite{hybridxiang2011} and Wang {\em et al.} \cite{rebuilding} used both parity disks to reduce the disk I/O in single disk failure recovery. They designed efficient recovery algorithms for RDP codes and EVENODD codes. The proposed optimal recovery strategies can reduce approximately $1/4$ disk reads compared with conventional recovery algorithms.
Khan {\em et al.} \cite{rethinkingOKhan}\cite{searchofIOkhan2011} proved that the problem of finding minimum repair disk I/O for a given XOR-based erasure code is NP-hard in general, and Zhu {\em et al.} \cite{speedup2012zhu} proposed a polynomial-time approximation algorithm for this problem. 


The problem of designing new RAID-6 codes to optimize repair disk I/O has been studied in the more general context of optimizing disk I/O for distributed storage systems. Inspired by network coding, Dimakis {\em et al.} \cite{ncfordss} proved a lower bound on the minimum bandwidth consumption in the recovery.
As the amount of data transmitted is always no more than the amount of data read, the repair disk I/O is at least the minimum repair bandwidth. Therefore, Dimakis' lower bound on the latter directly implies a lower bound on the former, which implies that reading at least $(k+1)r/2$ blocks is necessary for the repair in RAID-6 codes with $k$ data disks and $r$ rows in the array. According to the study of minimum bandwidth with exact repairs \cite{exactRepairD2K3}, it is impossible for a $(k>4, r=2)$ RAID-6 code to achieve this lower bound in the repair of every single disk.

However, Tamo {\em et al.} and En Gad {\em et al.} recently showed that the bound $(k+1)r/2$ is achievable for the repair of a data disk.
Specifically, Tamo {\em et al.} proposed the Zigzag MDS array codes that minimize the repair disk I/O \cite{zigzag}. Their codes require coding over a field of size at least 3 and achieve optimal update as well. Furthermore, Zigzag codes have strip size $r=2^{k-1}$, and they proved that this strip size is optimal for all systematic, update-optimal and repair-optimal MDS codes. 
En Gad {\em et al.} \cite{repairGF2} also proposed a family of RAID-6 array codes over $\mathbb{F}_2$ that achieve the optimal repair disk I/O for a data disk recovery.

Our work differs from the above in the following aspects. First, they only optimized disk I/O for the repair of {\em data disks}, while we consider the repair of {\em every disk}. Both of their codes require reading $kr$ blocks to repair the row parity disk, but MDR codes require reading only $(k+1)r/2$ blocks. We further prove that the minimum disk I/O to repair the Q disk is at least $kr$ in RAID-6 codes with a row parity disk.
Second, MDR codes also minimize the computational overhead, which is not considered in their works.
To the best of our knowledge, MDR codes are the first that minimize repair disk I/O and computational overhead at the same time.
Third, we proposed a generic approach for constructing repair-optimal RAID-6 codes from an initial code satisfying certain conditions, which can be found by computer search.

Compared with Zigzag codes, a drawback of MDR codes is that we trade-off update disk I/O for restricting coding operations to over $\mathbb{F}_2$, the same as in En Gad's codes. However, recent reports show that there are many archive-style storage systems where update operations are rare. For example, in Windows Azure \cite{azure}, the storage system is used in an append-only way. We also trade-off the strip size for the optimal encoding complexity --- the strip size of MDR codes is twice as much as in the Zigzag codes and En Gad's codes.

\begin{table*}[!htb]
	\centering
  \caption{Comparison between repair-optimal codes.}
    \begin{tabular}{|l|c|p{2cm}|c|c|c|c|}
    \hline
      & {\bfseries Field Size} & {\bfseries \#Disk Repairs Improved} & {\bfseries Strip Size} & {\bfseries  Disk I/O in Update} & {\bfseries  Encoding Complexity} \\
    \hline
    Zigzag codes \cite{zigzag} & $\geq 3$  &  $k$ & $2^{k-1}$ & 2 & --- \\
    \hline
    En Gad's codes\cite{repairGF2} & 2  & $k$ & $2^{k-1}$ & $1/2 \cdot \lfloor k/2 \rfloor +2$ & $k+k/2 \cdot \lfloor k/2 \rfloor$ \\
    \hline
    MDR codes & 2  & $k+1$ & $2^{k}$ & $(k+7)/4 $ & $k-1$ \\
    \hline
    \end{tabular}
    \label{tab:comparison}
\end{table*}

We summarize the comparison in Table~\ref{tab:comparison}, where ``\#Disk Repairs Improved'' refers to the number of disks that can be repaired with reading $(k+1)r/2$ blocks, ``Disk I/O in Update'' refers to the average number of parity blocks changed in the update of a data block, and ``Encoding Complexity'' refers to the number of XORs to compute each block of the Q disk. For encoding complexity, Zigzag codes require $k-1$ additions and up to $k$ multiplications over a field of size at least $3$. We note that the optimal encoding complexity of MDR codes is achieved under the condition that the P disk is computed at the same time. The repair disk I/O, encoding complexity, and update disk I/O of MDR codes are analyzed in Sec.~\ref{sec:construction}, Sec.~\ref{sec:analysis} and Sec.~\ref{sec:arraysize}, respectively.

\section{Preliminaries and Notations}
\label{sec:preliminaries}
\subsection{Erasure Codes and RAID-6 Specification}
Erasure codes ensure data reliability by encoding a message of $k$ symbols into $n$ symbols, so that the message can be recovered even if some symbols are lost. An optimal erasure code can tolerate the loss of any $m = n-k$ symbols. We can find such an optimal erasure code in linear codes, which is also called an $(n,k)$-MDS code. Compared with replication, MDS codes are storage efficient, since replication requires to store $mk$ symbols instead of $k+m$ symbols to provide the same reliability against $m$ erasures.

In the specification of RAID-6, there are $k+2$ storage nodes each holding the same amount of data, and up to two node failures can be tolerated.
A RAID-6 code can therefore be viewed as an $(n=k+2,k)$-MDS code. RAID-6 requires $k$ data disks to store original information, and hence is a {\em systematic code}. Two coding disks further store coded data.
In this work, we use $D_1, D_2, \cdots, D_k$ to denote the $k$ data disks, and $D_{k+1}, D_{k+2}$ to denote the two coding disks, which are called P disk and Q disk, respectively.

\subsection{Parity Array Coding Technique}

In a parity array code, data stored on each disk is grouped into $r$ blocks of equal length, which are called a strip. Bit-wise XOR is applied to these blocks to generate parity blocks.
Blocks in disks $D_1, D_2, \cdots, D_{k+2}$ are typically arranged into an array of $r$ rows and $k+2$ columns.
Let $d_{i,j}, (1\leq i\leq k+2, 1\leq j \leq r)$ denote the $j$-th block in disk $D_i$, and $d$ be the column vector $[d_{1,1}\  d_{1,2}\  \cdots \  d_{k+2,r}]^T$.

\begin{figure}
\centering
  \includegraphics[width=0.48\textwidth]{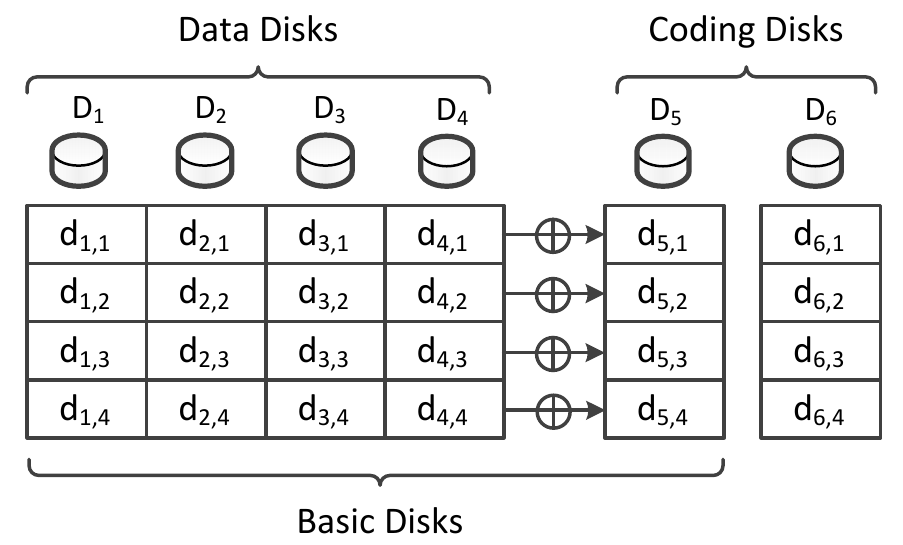}\\
  \caption{Block arrangement in RAID-6 codes with a row parity disk.}\label{fig:raidcode}
\end{figure}


Most implementations of RAID-6 array codes use the first coding disk $D_{k+1}$ as a row parity, {\em i.e.}, $\forall j=1,\cdots,r$,
$
d_{k+1,j} = d_{1,j} + d_{2,j} + \cdots + d_{k,j}
$,
where addition is over the finite field $\mathbb{F}_2$. This makes it easily extendable from RAID-5 by simply adding another coding disk. Fig.~\ref{fig:raidcode} illustrates the general idea of how parity blocks are calculated in such codes. Due to the similarity between the row parity disk $D_{k+1}$ and the data disks $D_1, D_2, \cdots, D_k$, we call them the {\em \Pdisk{}s}.


\subsection{Notations}
For a positive integer $n$, we use $[n]$ to represent the set $\{1, 2, \cdots, n\}$. For an $m$-by-$n$ matrix $A$, a row index set $R \subset [m]$ and a column index set $C \subset [n]$, we use $A|_{R,C}$ to denote the sub-matrix of $A$ induced by the $R$ rows and $C$ columns. We refer to a RAID-6 code supporting $k$ data disks with $r$ rows in the array as  a $(k,r)$ RAID-6 code. For simplicity, we assume $r$ is even.

\section{The Generator Matrix Approach}\label{sec:approach}
In this section, we use the generator matrix to formulate the problem of minimizing repair disk I/O, and develop an equivalent condition for optimal repair.


Let $d_i$ denote the column vector of the $r$ blocks $[d_{i,1}\ d_{i,2} \ \cdots \ d_{i,r}]^T$ in disk $D_i$. The generator matrix is illustrated in Fig.~\ref{fig:matrixvector}, where the shaded elements of the matrix are ones and the other elements are zeros.  As we consider RAID-6 codes with a row parity disk, we have $d_{k+1} = d_1 + d_2 + \cdots + d_k$. Therefore, to design a RAID-6 code, we only need to specify how the coding disk $D_{k+2}$ is coded. According to the generator matrix, $d_{k+2}$ can be written as
\[
d_{k+2} = A_1 d_1 + A_2 d_2 + \cdots + A_k d_k
\] where $A_1, A_2, \cdots, A_k$ are square matrices of size $r$.

\begin{figure}
\centering
  \includegraphics[width=0.48\textwidth]{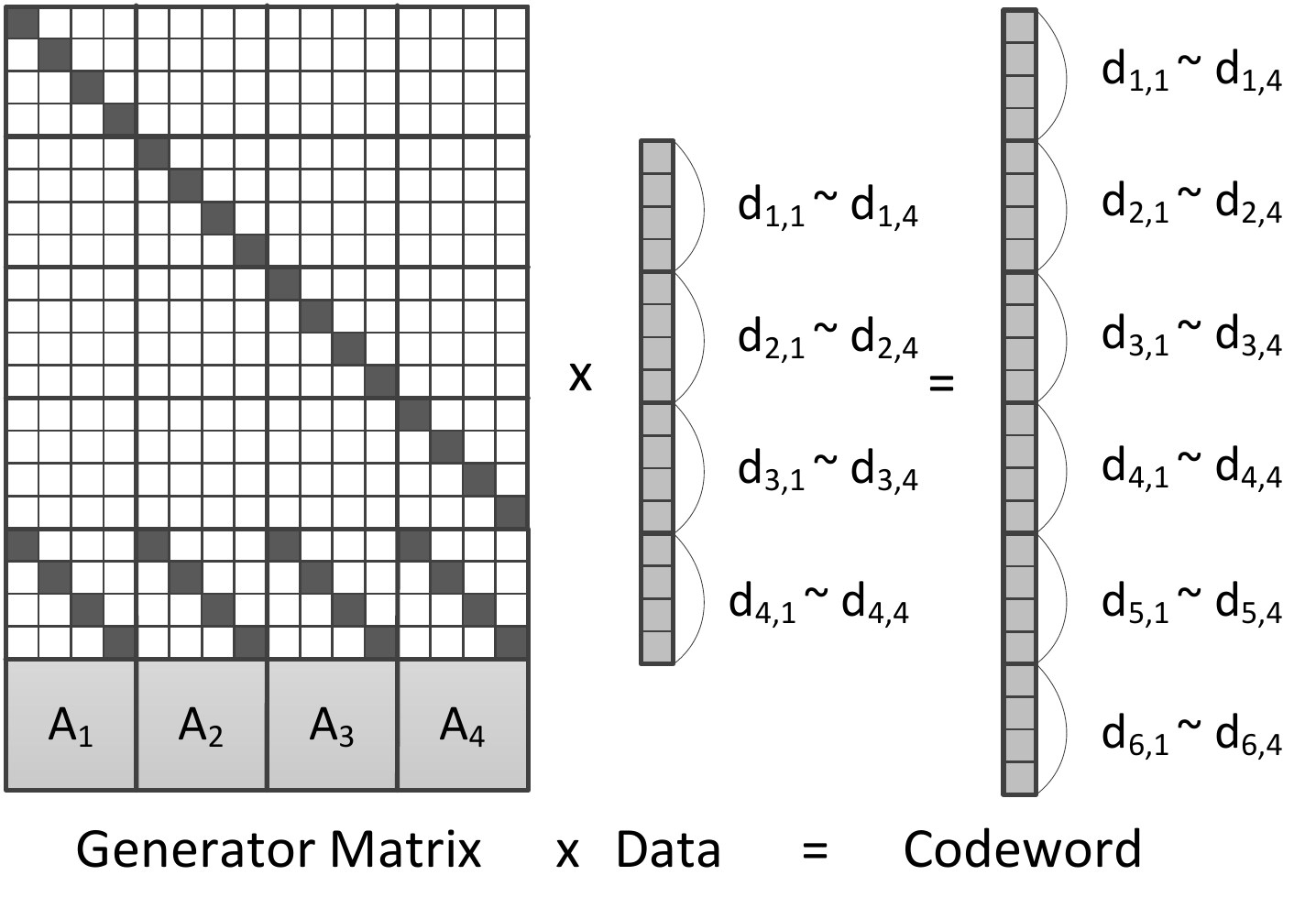}\\
  \caption{A RAID-6 code with a row parity disk is uniquely determined by its generator sub-matrices $A_1, A_2, \cdots, A_k$. }\label{fig:matrixvector}
\end{figure}

We call $A_1, A_2, \cdots, A_k$ the {\em generator sub-matrices}. An XOR-based RAID-6 code with a row parity disk is uniquely determined by its generator sub-matrices.

RAID-6 requires that the system can be reconstructed from any two disk failures. According to the study of Blaum and Roth \cite{lowestdensityMDS}, the code described by matrices $A_1, A_2, \cdots, A_k$ satisfies such RAID-6 specification if and only if $A_1, A_2, \cdots, A_k$ satisfy the following conditions, which we refer to as the MDS property:
\begin{itemize}
\item $A_i$ is non-singular for all $i\in [k]$
\item $A_i + A_j$ is non-singular for all $i,j\in [k], \  i \neq j$
\end{itemize}

\subsection{A Matrix Representation of the Minimum Disk I/O Problem}
With two parity disks, each data block $d_{i,j}$ can be represented by the sum of other data blocks in multiple ways.
This makes it difficult to find the minimum repair disk I/O. In fact, Khan \cite{rethinkingOKhan} showed that solving this problem for an XOR-based code is NP-hard in general. In this subsection, we rewrite this problem in terms of the generator sub-matrices $A_1, A_2, \cdots, A_k$.

The problem of finding minimum disk I/O is essentially the same as representing the lost data blocks by a set of surviving data blocks of minimum size. A key observation is that all representations are based on the following equation in terms of parity-check matrix $H$
\begin{eqnarray}
H d
=
\left[
\begin{array}{cccccc}
I & I & \cdots & I & I & 0\\
A_1 & A_2 & \cdots & A_k & 0 & I
\end{array}
\right]
\left[
\begin{array}{c}
d_1\\
d_2\\
\vdots\\
d_{k+2}
\end{array}
\right]
= 0
\label{eqn:basic}
\end{eqnarray}
Each row of $H$ describes an equation that can be interpreted as a representation of each block involved in it. Take the case of $r=2, k=2$ as an example, the first row of the parity-check matrix
$
\left[
\begin{array}{cccccc}
I_{2\times 2} & I_{2\times 2} & I_{2\times 2} & 0\\
A_1 & A_2 &  0 & I_{2\times 2}
\end{array}
\right]
$
is $[1\  0\ \    1\  0\  \   1\  0\  \   0\  0]$, which means $d_{1,1} + d_{2,1} + d_{3,1} = 0$ and equivalently, any one of $d_{1,1}, d_{2,1} , d_{3,1}$ can be represented as the sum of the other two. In fact, each representation is equivalent to an equation that can be derived as a linear combination of the rows in equation  (\ref{eqn:basic}). This observation leads to the following theorem.
\begin{theorem}
Let $\hat{N}(A)$ denote the number of non-zero columns in matrix $A$, then the minimum disk I/O to recover the row parity disk $D_{k+1}$ equals
\[
\min_{\textrm{$X$}} \hat{N}([I + XA_1\quad  I+XA_2\quad \cdots \quad I+XA_k\quad  X])
\]
where $X$ is a square matrix of size $r$. The minimum disk I/O to recover the coding disk $D_{k+2}$ equals
\[
\min_{\textrm{$X$}} \hat{N}([X + A_1\quad  X+A_2\quad \cdots \quad X+A_k\quad  X])
\]\label{thm:minIOfml}
\end{theorem}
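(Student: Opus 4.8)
The plan is to translate the repair problem into a linear-algebra statement about the row span of the parity-check matrix $H$, and then read off the number of disk blocks touched as the number of non-zero columns in a representative of a suitable coset of row-combinations. Concretely, to repair disk $D_{k+1}$ we must reconstruct the vector $d_{k+1}$ from blocks read on the surviving disks $D_1,\dots,D_k,D_{k+2}$. Any valid recovery scheme amounts to a system of linear equations over $\mathbb{F}_2$ that (i) is implied by equation~(\ref{eqn:basic}), and (ii) expresses all $r$ coordinates of $d_{k+1}$ in terms of the surviving blocks. Since every equation implied by (\ref{eqn:basic}) is a linear combination of its $2r$ rows, and since the goal is to produce $r$ independent equations that, restricted to the $D_{k+1}$-columns, form the identity $I_r$, I would parametrize such a combination by the coefficients applied to the two block-rows of $H$: write the combined coefficient matrix as $[\,Y \mid X\,]$ acting on the top and bottom block-rows respectively, where $Y,X$ are $r\times r$ matrices over $\mathbb{F}_2$.

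The second step is to pin down $Y$. Multiplying out, the combination $[\,Y\mid X\,]H$ has block structure
\[
[\,Y + XA_1 \quad Y + XA_2 \quad \cdots \quad Y + XA_k \quad Y \quad X\,],
\]
and the coefficient of $d_{k+1}$ is the block $Y$. For this to be a clean recovery of $D_{k+1}$ we need that block to be the identity, i.e. $Y = I_r$; note this is without loss of generality because any recovery of $D_{k+1}$ can be left-multiplied by the inverse of its (necessarily invertible) $D_{k+1}$-coefficient block to bring it to this normalized form, and left-multiplication by an invertible matrix changes neither the set of blocks read nor the validity of the scheme. Substituting $Y = I_r$ gives the row of coefficient blocks $[\,I+XA_1 \ \cdots\ I+XA_k\ \ X\,]$ (the $D_{k+1}$-column block $I$ itself corresponds to the blocks being reconstructed, not read). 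The blocks that must be \emph{read} are exactly those appearing with a non-zero column in this matrix: a non-zero column in the $i$-th block $I + XA_i$ forces reading the corresponding block of $D_i$, and a non-zero column in $X$ forces reading the corresponding block of $D_{k+2}$. Minimizing the read count over all schemes is therefore minimizing $\hat{N}([\,I+XA_1\ \cdots\ I+XA_k\ \ X\,])$ over all $X$, which is the first claimed formula. The argument for $D_{k+2}$ is identical except that the normalization forces the $D_{k+2}$-coefficient block to be $I_r$; relabeling that block as $X$ (rather than the old $X$) turns the $D_i$-blocks into $X + A_i$ and the $D_{k+1}$-block into $X$ as well, giving the second formula.

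The main obstacle — and the step that needs the most care — is justifying the two reductions that make the formula tight rather than merely an upper bound: first, that \emph{every} minimum-I/O recovery scheme, not just the "linear-combination-of-$H$-rows" ones, arises this way, and second, that the normalization $Y=I_r$ (resp. the $D_{k+2}$-block $=I_r$) loses no generality. For the first, one argues that since the code is linear over $\mathbb{F}_2$ and the data is arbitrary, any function that recovers the lost disk from a fixed set $S$ of read blocks must be linear in those blocks, and the recovered values must satisfy every linear relation forced by the code — so the recovery equations lie in the row space of $H$; conversely any such set of equations that is solvable for $d_{k+1}$ yields a scheme reading at most the blocks appearing in it. For the second, the key point is that the $r$ recovery equations must be independent and must determine $d_{k+1}$ completely, which forces the $D_{k+1}$-coefficient block (a square $r\times r$ matrix) to be invertible; left-multiplying the whole system by its inverse normalizes it to $Y=I_r$ while preserving the column support on the other blocks. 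Once these two points are nailed down the equality, not just $\le$, in both formulas follows, and this is where I would expect to spend most of the writing.
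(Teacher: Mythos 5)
Your proposal is correct and takes essentially the same route as the paper: both parametrize the $r$ recovery equations by a coefficient matrix $[Y \mid Z]$ acting on the two block-rows of the parity-check matrix $H$, observe that the $D_{k+1}$-block (resp.\ $D_{k+2}$-block) of $[Y\mid Z]H$ must be invertible to solve for the lost disk, and then use the fact that left-multiplication by an invertible $r\times r$ matrix preserves the positions (hence the number) of zero columns to normalize that block to $I_r$; the paper names the surviving free parameter $X = Y^{-1}Z$ (resp.\ $X = Z^{-1}Y$) after normalizing, while you normalize to $Y=I$ up front and relabel, which is only a notational difference.
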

\vspace{-6mm}
\begin{IEEEproof}
According to our previous analysis, every representation is a linear combination of rows in equation (\ref{eqn:basic}), which can be described as:
$
\left[ v_1 \  v_2 \   \cdots \  v_{2r} \right] H d = 0
$. To recover the disk $D_{k+1}$ is equivalent to represent the lost data $d_{k+1}$ by $r$ equations. Group the equations into the form $V_{r \times 2r} H d = 0$ and rewrite $V_{r\times 2r} = [Y\  Z]$, where $Y,Z$ are square matrices of size $r$:
\[
[Y\ Z] \left[
\begin{array}{cccccc}
I & I & \cdots & I & I & 0\\
A_1 & A_2 & \cdots & A_k & 0 & I
\end{array}
\right] \left[
\begin{array}{c}
d_1\\
d_2\\
\vdots\\
d_{k+2}
\end{array}
\right]
= \sum_{i=1}^{k} (Y+ZA_i)d_i + Y d_{k+1} + Z d_{k+2}= 0
\]
In order to solve $d_{k+1}$, its coefficient matrix $Y$ must be invertible. Block $d_{i,j}$ is used in the recovery if and only if the $j$-th column of the coefficient matrix of $d_i$ contains some 1's. Therefore, the total number of blocks used in the recovery equals the number of non-zero columns in the matrix
$
[Y+ZA_1\quad  Y+ZA_2 \quad  \cdots \quad  Y+ZA_k \quad  Z]
$.
For a column vector $z$, $Y^{-1}z$ is a zero vector if and only if $z=0$, since $Y^{-1}$ is of full rank. Therefore, left-multiplying the matrix with $Y^{-1}$ does not change its number of zero columns, and the case of recovering $D_{k+1}$ is proved with $X=Y^{-1}Z$.
For the case of $D_{k+2}$, $Z$ must be non-singular, and the statement can be derived by left-multiplying the matrix with $Z^{-1}$ and letting $X=Z^{-1}Y$.
\end{IEEEproof}

To calculate the minimum disk I/O for repairing a data disk $D_i, i\in [k]$, we may first treat disk $D_i$ as the row parity disk by eliminating sub-matrix $A_i$ in the parity-check matrix $H$, and then applying Theorem \ref{thm:minIOfml}. For example, consider the case of recovering $D_1$, we may left-multiply equation (\ref{eqn:basic}) with
$\left[\begin{array}{cc}
I & 0\\
A_1 & I
\end{array}\right]
$, so that the parity-check matrix $H$ is transformed into:
\[
\left[
\begin{array}{cccccc}
I & I & \cdots & I & I & 0\\
0   & A_2 + A_1 & \cdots & A_k + A_1 & A_1 & I
\end{array}
\right]
\]
Then the minimum disk I/O can be calculated in a similar way as in Theorem \ref{thm:minIOfml}.

\subsection{A Lower Bound on the Minimum Recovery Disk I/O}\label{sec:lowerbound}
In this subsection, we use the generator matrix approach to prove the lower bound on the disk I/O of repairing a single disk failure in any RAID-6 codes with a row parity disk. Note that Dimakis {\em et al.} \cite{ncfordss} have proved an achievable lower bound on the minimum repair bandwidth for functional repair, which implies that the number of blocks read is at least $(k+1)r/2$. Theorem \ref{thm:nBlocksFromEachDisk} strengthens this result for exact repair in RAID-6 codes in two aspects: 1) for the repair of the Q disk, we prove that the minimum disk I/O is at least $kr$; 2) Dimakis' theorem assumes each disk transmits the same amount of information. We drop this assumption and prove that each surviving disk must read at least $r/2$ blocks to repair a basic disk.

\begin{theorem}
The minimum disk I/O to recover a basic disk $D_i, i\in [k+1],$ is at least $(k+1)r /2$.
Further, the amount of data read from each surviving disk must be no less than $r /2$.
To recover the coding disk $D_{k+2}$, the minimum disk I/O is at least $kr$.
\label{thm:nBlocksFromEachDisk}
\end{theorem}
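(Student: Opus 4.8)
The plan is to run everything through the formulas of Theorem~\ref{thm:minIOfml}. Recall that repairing $D_{k+1}$ costs $\min_X \hat N([I+XA_1\ \cdots\ I+XA_k\ X])$, and — reading the coefficient-matrix bookkeeping inside the proof of that theorem — a repair described by $X$ reads $\hat N(I+XA_i)$ blocks from each basic disk $D_i$, $i\in[k]$, and $\hat N(X)$ blocks from $D_{k+2}$; repairing $D_{k+2}$ costs $\min_X \hat N([X+A_1\ \cdots\ X+A_k\ X])$. For a basic disk $D_i$ with $i\le k$ I would first reduce to the case $i=k+1$: left-multiplying~(\ref{eqn:basic}) by $\left[\begin{smallmatrix}I & 0\\ A_i & I\end{smallmatrix}\right]$, exactly as indicated after Theorem~\ref{thm:minIOfml}, turns the array into a RAID-6 code with a row-parity disk in which $D_i$ now plays the role of that row-parity disk and the data-disk sub-matrices become $A_i$ and $\{A_m+A_i:m\ne i\}$. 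Since $A_i$, the $A_m+A_i$, and their pairwise sums (which are among the $A_m$ and the $A_m+A_{m'}$) are all non-singular, the new code still satisfies the MDS property, so it suffices to treat $D_{k+1}$ and $D_{k+2}$.

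For $D_{k+1}$, abbreviate $M_0:=X$ and $M_i:=I+XA_i$ for $i\in[k]$. The key claim is that \emph{for every $X$ and every pair of distinct indices $a,b\in\{0,1,\dots,k\}$, the horizontal concatenation $[M_a\ M_b]$ has rank $r$}. This is a one-line kernel computation: if $u^\top M_i=u^\top M_l=0$ with $i\ne l$ in $[k]$, then $u^\top X(A_i+A_l)=0$, hence $u^\top X=0$ because $A_i+A_l$ is non-singular, and then $u^\top=u^\top(I+XA_i)=u^\top M_i=0$; the case $[M_i\ M_0]=[I+XA_i\ X]$ is immediate since $u^\top X=0$ and $u^\top(I+XA_i)=0$ force $u^\top=0$. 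A matrix of rank $r$ has at least $r$ non-zero columns, so $\hat N(M_a)+\hat N(M_b)=\hat N([M_a\ M_b])\ge r$ for each of the $\binom{k+1}{2}$ pairs. Summing these inequalities, in which each $M_a$ occurs $k$ times, yields the repair I/O $\sum_{a=0}^{k}\hat N(M_a)\ge\frac{(k+1)r}{2}$ (this is the cut-set bound of Dimakis et al., reproved inside the matrix model). Moreover equality forces every pairwise sum to equal $r$, which in turn forces all the $\hat N(M_a)$ equal and hence each equal to $r/2$; thus any repair attaining the minimum disk I/O reads exactly $r/2$ blocks from every surviving disk.

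For $D_{k+2}$ the same concatenation argument only delivers $(k+1)r/2$, so here I would count column by column instead. Fix a column index $j\in[r]$ and look at the $j$-th columns of the $k+1$ blocks $X+A_1,\dots,X+A_k,X$. If two of them were simultaneously zero, then adding the corresponding equations gives either $(A_i)_{\cdot j}=0$ (when one of the two blocks is $X$) or $(A_i+A_l)_{\cdot j}=0$ for some $i\ne l$, each contradicting the MDS property. Hence for every $j$ at most one of the $k+1$ blocks is zero in column $j$, so the total number of zero columns over all $k+1$ blocks is at most $r$; therefore $\hat N([X+A_1\ \cdots\ X+A_k\ X])\ge(k+1)r-r=kr$, as claimed.

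The main thing to get right is the choice of inequality in the basic-disk case: one must bound $\hat N(M_a)+\hat N(M_b)$ below by $\operatorname{rank}[M_a\ M_b]$ — which the MDS property pins at the full value $r$ — rather than by the much weaker $\hat N(M_a+M_b)$, which only sees $\operatorname{rank}(X)$ and is far too lossy once $X$ has large rank. After that the remaining points are routine bookkeeping: verifying that the reduction for $i\le k$ preserves the MDS property, confirming that the proof of Theorem~\ref{thm:minIOfml} actually supplies the per-disk read counts $\hat N(I+XA_i)$ and $\hat N(X)$, and observing that for $D_{k+2}$ the sharper bound $kr$ genuinely requires the column-by-column counting, because no per-disk lower bound can hold there (one surviving disk may be read in full while another is not touched at all).
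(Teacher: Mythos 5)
Your proposal is correct and follows essentially the same route as the paper: reduce the data-disk case to the row-parity case by the same left-multiplication, prove the same pairwise full-rank fact to force $\hat N(M_a)+\hat N(M_b)\ge r$, sum over pairs to get $(k+1)r/2$ with equality forcing $r/2$ per disk, and count zero columns per column index for the $kr$ bound on $D_{k+2}$. The only cosmetic difference is that you establish the rank claim via a left-null-space computation while the paper derives it from the non-singularity of the block matrix $\left[\begin{smallmatrix}I&I\\A_i&A_j\end{smallmatrix}\right]$ (and $\left[\begin{smallmatrix}I&0\\A_i&I\end{smallmatrix}\right]$), which is the same fact by a different manipulation.
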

\begin{IEEEproof}
Firstly, consider the case of repairing disk $D_{k+1}$. Let $X$ be a matrix that maximizes the number of zero columns in matrix
$
[I+XA_1 \quad I+XA_2 \quad \cdots \quad I+XA_k \quad X]
$.
According to the MDS property, for any $i,j\leq k, i\neq j$, matrix
$\left[\begin{array}{cc} I & I\\ A_i & A_j\end{array}\right]$
is of full rank, which implies matrix
$\left[\begin{array}{cc} I+XA_i & I+XA_j\\ A_i & A_j\end{array}\right]$
is non-singular, and therefore, the rank of matrix $[I+XA_i $ $I+XA_j]$ must be $r$. So the total number of zero columns in $[I+XA_i\quad I+XA_j]$ is at most $r$. Similarly, as matrix
$\left[\begin{array}{cc} I & 0\\ A_i & I\end{array}\right]$
has full rank, we can conclude that the total number of zero columns in $[I+XA_i\quad X]$ is no more than $r$. Let $z_i, i\in [k]$ denote the number of zero columns in $I+XA_i$, and $z_{k+1}$ denote the number of zero columns
in $X$. From the following optimization problem:
\begin{eqnarray*}
\max & \sum_{i=1}^{k+1} z_i & \\
\textrm{subject to:} & z_i + z_j \leq r & \forall i,j\leq k+1, i\neq j
\end{eqnarray*}
we can see that the maximum total number of zero columns is $(k+1)r/2$ for $k\geq 2$, and the optimal value is achieved only with $z_1 = z_2 = \cdots = z_{k+1} = r/2$. According to Theorem \ref{thm:minIOfml}, the minimum disk I/O is at least $(k+1)r/2$ and is only achieved by reading $r - r/2 = r/2$ blocks from each surviving disk.

Secondly, for the case of repairing a data disk $D_i, i\in [k]$, we may consider disk $D_i$ as the row parity disk with generator sub-matrices $\{A_i, A_j + A_i \ |\  i,j\in [k], i\neq j\}$. The same result can be concluded in a similar way.

Finally, consider the case of repairing the coding disk $D_{k+2}$.
We can see that the indices of zero columns in $X+A_1, X+A_2, \cdots, X+A_k, X$ can not be the same, since if $X+A_i$ and $X+A_j$ has a zero column at the same position, the matrix $X+A_i + X+A_j = A_i+A_j$ has a zero column, which conflicts with the MDS property that $A_i+A_j$ is nonsingular. Similarly, if $X + A_i$ and $X$ has a zero column at the same position, we will obtain $A_i$ is singular, which conflicts with the MDS property. Therefore, there are at most $r$ zero columns in matrix
\[[X+A_1 \quad X+A_2 \quad \cdots \quad X+A_k \quad X]\]
According to Theorem \ref{thm:minIOfml}, the minimum disk I/O to repair $D_{k+2}$ is at least $kr$.
\end{IEEEproof}

\subsection{An Equivalent Condition for Optimal Repair}
From the above analysis, we conclude that the minimum disk I/O to repair coding disk $D_{k+2}$ is $kr$, which is achieved by reading all data blocks. Thus we only need to consider the case of repairing a basic disk.

In particular, a repair strategy is represented by the set of blocks read from each surviving disk. Let $C_{i,j}$ denote the rows index set of blocks read from disk $D_i$ to repair disk $D_j$. For example, if $d_{2,1}, d_{2,3}, d_{2,4}$ are read from disk $D_2$ to repair disk $D_1$, then $C_{2,1} = \{1,3,4\}$. According to Theorem \ref{thm:nBlocksFromEachDisk}, $|C_{i,j}| = r/2$. The following theorem shows that we actually do not need so many sets to describe a repair strategy. In an optimal strategy, the row indices for each basic disk must be the same.

\begin{theorem}
In the repair of a basic disk $D_j, j\in [k+1]$, the minimum disk I/O is achieved only by reading the same rows of the surviving basic disks, i.e., $C_{1,j} = C_{2,j} = \cdots = C_{j-1,j} = C_{j+1,j} = \cdots = C_{k+1,j}$. Namely, $r/2$ lost blocks must be recovered by row parity.
\label{thm:samerow}
\end{theorem}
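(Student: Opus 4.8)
The plan is to reduce the claim to the repair of the row-parity disk $D_{k+1}$ and then argue directly on the matrix $X$ attached to an optimal repair. For a data disk $D_j$ with $j\in[k]$, left-multiplying the parity-check equation~(\ref{eqn:basic}) by $\left[\begin{array}{cc} I & 0\\ A_j & I\end{array}\right]$, exactly as in the paragraph after Theorem~\ref{thm:minIOfml}, makes $D_j$ the row-parity disk of an equivalent $(k,r)$ code whose generator sub-matrices are $A_j$ (for $D_{k+1}$) and $A_l+A_j$ (for $D_l$, $l\ne j$). These still satisfy the MDS property of Section~\ref{sec:approach}: each is non-singular, and each pairwise sum is either some $A_l$ or some $A_l+A_{l'}$, hence non-singular. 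So it suffices to prove the statement for $j=k+1$, the ``row parity'' invoked for $D_j$ being the original row-parity equation.

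Fix a repair of $D_{k+1}$ that attains the minimum disk I/O. By Theorem~\ref{thm:nBlocksFromEachDisk} it reads exactly $r/2$ blocks from each surviving disk, so with the matrix $X$ of Theorem~\ref{thm:minIOfml} the block $B_i:=I+XA_i$ has exactly $r/2$ non-zero columns for each $i\le k$ (these index $C_{i,k+1}$), and $X$ itself has exactly $r/2$ non-zero columns. Write $S_i$ (for $i\le k$) and $S_{k+1}$ for the corresponding zero-column sets, each of size $r/2$. As already observed in the proof of Theorem~\ref{thm:nBlocksFromEachDisk}, for each $i\le k$ the matrix $\left[\begin{array}{cc} B_i & X\\ A_i & I\end{array}\right]$ is non-singular, so $[B_i\quad X]$ has rank $r$; since this matrix has exactly $r/2+r/2=r$ non-zero columns, those columns form a basis of $\mathbb{F}_2^r$, and therefore $\operatorname{col}(B_i)$ and $\operatorname{col}(X)$ are complementary subspaces, in particular $\operatorname{col}(B_i)\cap\operatorname{col}(X)=\{0\}$.

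Now I would prove $S_1=\cdots=S_k$. Suppose not, and pick $i,j\in[k]$ and a row index $m$ with $m\in S_i$ but $m\notin S_j$. From $m\in S_i$ we get $0=B_i e_m=e_m+XA_i e_m$, hence $e_m=X(A_i e_m)\in\operatorname{col}(X)$. The $m$-th column of $B_j$ is then $B_j e_m=e_m+X(A_j e_m)$, a sum of two vectors of $\operatorname{col}(X)$, so $B_j e_m\in\operatorname{col}(X)$; but $B_j e_m$ is a column of $B_j$, so $B_j e_m\in\operatorname{col}(B_j)$. Hence $B_j e_m\in\operatorname{col}(B_j)\cap\operatorname{col}(X)=\{0\}$, i.e.\ $m\in S_j$ --- a contradiction. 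Thus $S_i\subseteq S_j$ for all $i,j\in[k]$, so all the $S_i$ ($i\le k$) coincide, which is precisely $C_{1,k+1}=\cdots=C_{k,k+1}$. Calling this common read set $C$, we have $|C|=r-r/2=r/2$, and for every $m\in C$ all of $d_{1,m},\dots,d_{k,m}$ have been read, so $d_{k+1,m}=\sum_{i=1}^k d_{i,m}$ is recovered by row parity --- that is $r/2$ of the lost blocks. Transporting this back through the change of basis of the first paragraph gives the statement for an arbitrary basic disk $D_j$.

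The main obstacle is the step that upgrades the counting fact ``$[B_i\quad X]$ has exactly $r$ non-zero columns'' into the geometric fact ``$\operatorname{col}(B_i)\cap\operatorname{col}(X)=\{0\}$'': this is exactly where the MDS property enters, through the rank-$r$ conclusion. Everything after that is a one-line manipulation that exploits the special shape $B_i=I+XA_i$ to drop $e_m$ into $\operatorname{col}(X)$. One small point to nail down is that a minimum-I/O repair reads no block it does not actually use, so that $C_{i,k+1}$ genuinely equals the set of non-zero columns of the coefficient block of $d_i$ and Theorem~\ref{thm:minIOfml} applies verbatim.
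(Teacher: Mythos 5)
Your proof is correct, and it takes a genuinely different route from the paper's. The paper argues information-theoretically: it treats each block as a random bit, uses the fact that the row-parity structure makes distinct rows mutually independent, and shows that any mismatch among $C_{2,1},\dots,C_{k+1,1}$ forces $H(D_1,S)>(k+1)r/2$, which is more entropy than the $(k+1)r/2$ read blocks can carry. You instead stay entirely inside the generator-matrix framework of Theorem~\ref{thm:minIOfml}: from the non-singularity of $\left[\begin{array}{cc} I & 0\\ A_i & I\end{array}\right]$ you get that $[\,I+XA_i\quad X\,]$ has rank $r$, which combined with ``exactly $r$ non-zero columns'' upgrades to $\operatorname{col}(I+XA_i)\cap\operatorname{col}(X)=\{0\}$; then the identity $B_i=I+XA_i$ lets you push $e_m$ into $\operatorname{col}(X)$ whenever $m$ is a zero column of $B_i$, forcing $m$ to be a zero column of every other $B_j$ as well. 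Your reduction of a data-disk repair to the $D_{k+1}$ case via the row operation $\left[\begin{array}{cc} I & 0\\ A_j & I\end{array}\right]$ is the same device the paper uses after Theorem~\ref{thm:minIOfml}, and your check that the transformed sub-matrices still satisfy the MDS property is correct. The small caveat you flag is easy to close: at the optimum the read set has size $(k+1)r/2$, any linear determination of $d_{k+1}$ from it yields (after normalizing $Y=I$) an $X$ whose non-zero columns lie inside the read set, and the lower bound on $\hat N(\cdot)$ forces equality, so the read sets $C_{i,k+1}$ are exactly the complements of your $S_i$. In terms of trade-offs, the paper's entropy argument is coordinate-free and does not rely on the explicit $I+XA_i$ parametrization, so it ports easily to other alphabets, while yours is more elementary and self-contained within the linear-algebra machinery already set up in Section~\ref{sec:approach}, at the cost of leaning on the $\mathbb{F}_2$/MDS-specific structure of $[\,B_i\quad X\,]$.
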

\begin{IEEEproof}
Without loss of generality, suppose we are repairing disk $D_1$. We need to prove that $C_{2,1} = C_{3,1} = \cdots = C_{k+1,1}$. We may regard each block $d_{i,j}$ as a random variable and consider their entropy. For simplicity, assume each block contains only 1bit, {\em i.e.},  $H(d_{i,j}) = 1$.

In the repair of $D_1$, let $S$ be the set of blocks read from $D_2, D_3, \cdots, D_{k+1}$. Let $T$ be the set of blocks read from $D_{k+2}$, and $R_i (i\in [r])$ be the $i$-th row of basic disks $D_1, D_2, \cdots, D_{k+1}$. Consider the entropy of blocks $D_1 \cup S$. As $D_{k+1}$ is the row parity disk, blocks from different rows are independent. In particular, we have $H(D_1, S) = \sum_{i=1}^{r} H(R_i \cap (D_1 \cup S))$. For each row $i$, $H(R_i \cap (D_1 \cup S)) = |R_i \cap(D_1 \cup S)| - 1$ only if $D_1 \cup S$ contains the entire row $R_i$. Otherwise, $H(R_i \cap (D_1 \cup S)) = |R_i \cap(D_1 \cup S)|$. Noting that $|\cap_{l=2}^{k+1} C_{l,1}|$ indicates the number of rows fully contained in $D_1\cup S$, we have
\[
H(D_1, S) = \sum_{i=1}^{r} |R_i \cap(D_1 \cup S)| - |\cap_{l=2}^{k+1} C_{l,1}| = (k+2)r/2 - |\cap_{l=2}^{k+1} C_{l,1}|
\]
As $|C_{2,1}| = |C_{3,1}| = \cdots = |C_{k+1,1}| = r/2$, if the row indices $C_{2,1}, C_{3,1}, \cdots, C_{k+1,1}$ are not the same, $|\cap_{l=2}^{k+1} C_{l,1}| < r/2$ and $H(D_1, S) > (k+1)r/2$. As $D_1$ can be reconstructed with $S\cup T$,
\begin{eqnarray*}
H(S, T) = H(D_1, S, T) \geq H(D_1, S) > (k+1)r /2
\end{eqnarray*}
which is a contradiction since there are only $(k+1)r/2$ blocks in $S\cup T$. Therefore, the row indices $C_{2,1}, C_{3,1}, \cdots, C_{k+1,1}$ must be the same.
\end{IEEEproof}

Therefore, in the recovery of a basic disk $D_j$, $j\in [k+1]$, we may use $C_j \subset [r]$ to denote the index set of blocks read from the surviving basic disks and $R_j \subset [r]$ to denote the index set of blocks read from the Q disk. The following theorem rewrites the condition of optimal repair in terms of generator sub-matrices.

\begin{theorem}
For a $(k\geq 2,r)$ RAID-6 code described by generator sub-matrices $A_1, A_2, \cdots, A_k$, each basic disk $D_i, i\in[k+1]$  can be repaired by a repair strategy $R_i, C_i, |R_i| = |C_i| = r/2$ if and only if there exists a matrix $B_{k+1}$, such that the series of matrices $B_j = A_j + B_{k+1}, j\in [k]$ and $B_{k+1}$ satisfy the following conditions:

1) For each $i \in [k+1]$, the sub-matrix $B_i |_{R_i,\overline{C_i}}$ is non-singular, and

2) For any $i, j\in [k+1], i \neq j$, $B_j |_{R_i,\overline{C_i}} = 0$.
\label{thm:optimalRepair}
\end{theorem}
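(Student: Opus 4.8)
The plan is to translate the matrix-formula characterization of Theorem \ref{thm:minIOfml} (and its extension to data disks via the left-multiplication trick described right after that theorem) into the combinatorial language of the row-index sets $R_i, C_i$, and then read off what "optimal repair of every basic disk simultaneously" forces. The key bookkeeping device will be a single matrix $X$ that realizes the optimal recovery of $D_{k+1}$ in Theorem \ref{thm:minIOfml}; I will set $B_{k+1} := X$ (equivalently, $B_{k+1}$ plays the role of the matrix $Y^{-1}Z$), and $B_j := A_j + B_{k+1}$ for $j \in [k]$. The point is that $I + X A_j = A_j^{-1}(A_j + X)\cdot(\text{adjustment})$ — more precisely, after the left-multiplication that turns $D_j$ into a "row parity disk" one sees that the matrix whose non-zero columns count the disk I/O for repairing $D_j$ is (up to an invertible left factor that does not change zero-column structure) exactly $[\,B_1|\ \cdots\ |B_{k+1}\,]$ with the $j$-th block omitted, together with a trailing block coming from the Q disk. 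So the $B_j$'s are the natural common objects.

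First I would make precise, for a fixed basic disk $D_j$, the equivalence between "$D_j$ is repaired by the strategy $(R_j, C_j)$ with $|R_j| = |C_j| = r/2$" and two facts about the $B_i$'s: (a) the columns of $B_i$ outside $C_j$, restricted to rows outside something, must vanish for $i \ne j$, and (b) the analogous restriction of $B_j$ must be invertible so that the lost blocks can actually be solved for. Here is where Theorem \ref{thm:samerow} does the heavy lifting: it guarantees that the blocks read from the surviving basic disks all lie in the \emph{same} $r/2$ rows $C_j$, and that exactly $r/2$ lost blocks of $D_j$ are recovered by pure row parity — which is what pins down that the rows $\overline{C_j}$ of $B_i$ ($i\ne j$, $i \le k+1$) contribute nothing, i.e. condition 2) with the index pair $(j,i)$, and that the remaining $r/2$ lost blocks are recovered using the Q disk rows $R_j$, whose solvability is exactly the non-singularity of $B_j|_{R_j,\overline{C_j}}$, i.e. condition 1). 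Running this for each $j \in [k+1]$ gives both numbered conditions. For the converse, given $B_{k+1}$ and the sets $(R_i,C_i)$ satisfying 1)--2), I would exhibit the explicit repair: read rows $C_i$ of the surviving basic disks (recovering $|C_i|$... wait, recovering the $\overline{C_i}$ rows? — read the $C_i$ rows of the survivors to get those rows of $D_i$ by row parity is \emph{not} it; rather, read rows $\overline{C_i}$? I will have to chase the complement conventions carefully) and rows $R_i$ of the Q disk, then invert $B_i|_{R_i,\overline{C_i}}$ to finish; condition 2) ensures the Q-disk equations, when restricted to rows $R_i$, involve only the already-known survivor blocks plus the unknown $\overline{C_i}$-blocks of $D_i$.

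The main obstacle I expect is \emph{the careful translation of complements and the reduction for data disks}. For the P disk $D_{k+1}$ the formula in Theorem \ref{thm:minIOfml} is literally $[\,I + XA_1\ \cdots\ I+XA_k\ X\,]$, so with $B_{k+1} = X$ and $B_j = A_j + X$ one needs $I + XA_j$ to have the same zero-column pattern as $B_j = A_j + X$ — this is \emph{not} automatic and is the crux: it should follow by left-multiplying by $A_j^{-1}$ after another massage, or by observing $I + XA_j$ and $X + A_j$ differ by right-multiplication by $A_j$... but right-multiplication permutes/mixes columns and does change zero-column \emph{counts} in general, so I must instead work with the data-disk reduction (left-multiply (\ref{eqn:basic}) by the block matrix with $A_1$ in the lower-left, etc.) uniformly for \emph{all} $j \in [k+1]$, where the surviving sub-matrices become $\{A_i + A_j\} \cup \{A_j\}$ — i.e. exactly $\{B_i + B_j\}_{i\ne j} \cup \{B_j\}$ once we substitute $B_i = A_i + B_{k+1}$ and note $A_i + A_j = B_i + B_j$. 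That substitution is what makes the condition symmetric across all $k+1$ basic disks and is the conceptual heart of the argument; getting the row/column restriction bookkeeping to line up across that reduction, and checking the $k \ge 2$ hypothesis is used exactly where Theorem \ref{thm:nBlocksFromEachDisk} needs it (to force $|R_i| = |C_i| = r/2$ rather than some other split), is the part that will require genuine care rather than routine calculation.
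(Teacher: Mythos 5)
Your framework is the right one: you correctly identify that the proof should run through the parity-check linear-combination machinery of Theorem~\ref{thm:minIOfml}, that the data-disk reduction puts all $k+1$ basic disks on the same footing, and that the identity $A_i + A_j = B_i + B_j$ is what makes the $B$-description symmetric. But there is a genuine gap at exactly the spot you flag as ``the crux,'' and you do not resolve it.

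The gap is the definition of $B_{k+1}$ in the ``only if'' direction. You propose $B_{k+1} := X$ where $X=Y^{-1}Z$ realizes the optimal repair of $D_{k+1}$. This cannot work, and not merely because of a zero-column mismatch: for the $D_{k+1}$ repair one has $Y+ZA_{k+1}=Y=I$, so $X=Z$, and the constraint ``do not read unread rows of the Q disk'' forces $Z|_{E,\overline{R_{k+1}}}=0$, i.e.\ $X$ is supported on $E\times R_{k+1}$. The matrix $B_{k+1}$ that actually satisfies conditions~1) and~2), however, must vanish on $R_i\times\overline{C_i}$ for every $i\in[k]$ and be \emph{non-singular} on $R_{k+1}\times\overline{C_{k+1}}$; the $B_{k+1}$ the paper constructs is supported only on $R_{k+1}\times\overline{C_{k+1}}$. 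These supports are different in general, and $X$ is the coefficient of $d_{k+2}$ in the equation that solves $d_{k+1}$, whereas $B_{k+1}$ is the coefficient of $d_{k+1}$ in the decomposition $d_{k+2}=\sum_{j=1}^{k+1}B_j d_j$ --- structurally different objects. The missing idea is the following chain, which the paper carries out with the raw $Y,Z$ matrices rather than $X$: from $(Y+ZA_j)|_{E,\overline{C_i}}=0$ ($j\ne i$) and $Z|_{E,\overline{R_i}}=0$, restricting to the $\overline{C_i}$ rows and manipulating gives $Z|_{\overline{C_i},R_i}(A_j+A_i)|_{R_i,\overline{C_i}}=I$ for all $j\ne i$, hence $(A_j+A_i)|_{R_i,\overline{C_i}}$ is invertible and \emph{independent of} $j$. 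Taking $i=k+1$ (where $A_{k+1}=0$) yields that $A_1|_{R_{k+1},\overline{C_{k+1}}}=\cdots=A_k|_{R_{k+1},\overline{C_{k+1}}}$ is invertible; one then \emph{defines} $B_{k+1}$ to equal this common sub-matrix on $R_{k+1}\times\overline{C_{k+1}}$ and zero elsewhere, and checks conditions~1), 2) directly. You do not have this construction, and your proposed one fails.

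A secondary, smaller gap: the ``if'' direction, which is actually the easy one, is left in a confused state in your write-up (you ask yourself whether to read rows $C_i$ or $\overline{C_i}$). The resolution is: read rows $C_i$ of the surviving basic disks and rows $R_i$ of the Q disk. Row parity recovers $d_i|_{C_i}$; then writing $d_{k+2}=\sum_{j=1}^{k+1}B_j d_j$ and restricting to rows $R_i$, condition~2) eliminates every unknown except $d_i|_{\overline{C_i}}$ with coefficient $B_i|_{R_i,\overline{C_i}}$, which condition~1) lets you invert. This part does not need Theorem~\ref{thm:minIOfml} at all.
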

\begin{proof}
Please refer to the appendix for the proof.
\end{proof}

According to the definition of $B_i, i\in [k]$ in Theorem \ref{thm:optimalRepair}, the generator sub-matrices $A_i, i\in [k]$ can be derived as $A_i = B_i + B_{k+1}$. Therefore, a RAID-6 code can also be described by the $B_i, i\in [k+1]$ matrices. Note that the series of $B_i$ matrices is not unique for a RAID-6 code. Theorem \ref{thm:optimalRepair} states that if a RAID-6 code can be repaired with minimum disk I/O, it must has a series of $B_i$ matrices satisfying the two conditions 1) and 2).

Fig.~\ref{fig:optimalRepair} illustrates this idea. We represent the example RAID-6 code shown in Fig.~\ref{example} with $B_1, B_2, B_3, B_4$. The repair strategy for disk $D_1$ is reading the first two rows, thus $R_1=C_1 = \{1,2\}$ and  $\overline{C_1}=\{3,4\}$. As the sub-matrix $B_1 |_{R_1,\overline{C_1}}$ is non-singular and the corresponding sub-matrices of $B_2,B_3$ and $B_4$ are zero, $D_1$ can be repaired with optimal disk I/O.

\begin{figure}[h]
\centering
  \includegraphics[width=0.43\textwidth]{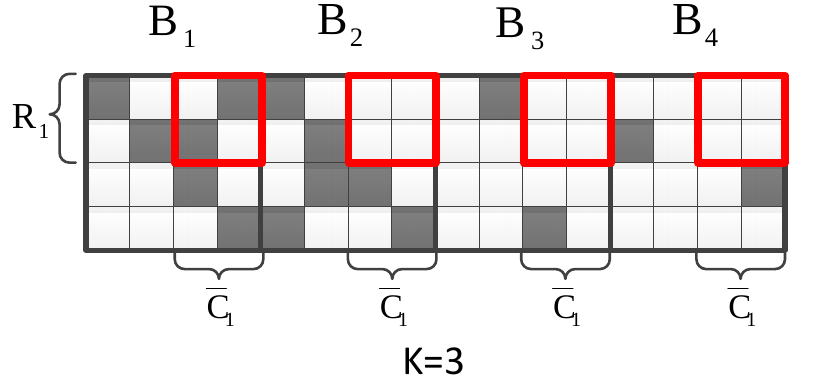}\\
  \caption{An example illustrating the equivalent condition of optimal repair.} 
  \label{fig:optimalRepair}
\end{figure}

Recall that $d_{k+1} = \sum_{i=1}^k d_i$, we have
\[
d_{k+2} = \sum_{i=1}^k A_i d_i =  \sum_{i=1}^k B_i d_i + B_{k+1}\sum_{i=1}^k  d_i = \sum_{i=1}^{k+1} B_i d_i
\]
As for $i,j\in [k]$, $A_i + A_j =  B_i + B_j$, the MDS property is equivalent to
\begin{itemize}
\item $ B_i + B_j $ is non-singular for all $i,j \in [k+1], i\neq j$
\end{itemize}

\section{Construction of MDR Codes}\label{sec:construction}
In this section, we propose the recursive construction approach that derives MDR codes. The approach starts with a repair-optimal RAID-6 code with small $k$ and $r$ that satisfies the following condition:

P1) For each $i\in [k-1]$,  $B_i$ is non-singular;

P2) $R_i = C_i, \, \forall i\in [k+1]$.

We will construct a $(k'=k+1, 2r)$ RAID-6 code with optimal repair disk I/O which also satisfies conditions (P1) and (P2). Denote the new code with matrices $B'_i, i\in [k'+1]$, and repair strategy $R'_i, C'_i, i \in [k'+1]$.
\begin{eqnarray*}
B'_i  &=& \left\{\begin{array}{ll}
\left[\begin{array}{cc} B_i+B_{k+1} & 0 \\ 0 & B_i+B_{k+1} \end{array}\right] &  \textrm{if $i\in [k'-1]$}\\
\vspace{-4mm}&\\
\left[\begin{array}{cc} 0 & I_{r \times r} \\ 0 & 0\end{array}\right] &  \textrm{if $i=k'$ }\\
\vspace{-4mm}&\\
\left[\begin{array}{cc} 0 & 0 \\ I_{r \times r} & 0\end{array}\right] &  \textrm{if $i=k'+1$}
\end{array}\right.
\\
R'_i  = C'_i &=& \left\{\begin{array}{ll}
\{x | x \in R_i \textrm{ or } x-r \in R_i\} &  \textrm{if $i\in [k'-1]$}\\
[2pt] [r] & \textrm{if $i=k'$} \\
[2pt] [ 2r ] \backslash [r] &  \textrm{if $i=k'+1$}
\end{array}\right.
\end{eqnarray*}
\begin{theorem}
The constructed $(k'=k+1, 2r)$ code is a RAID-6 code with optimal repair disk I/O and satisfies conditions (P1) and (P2).
\end{theorem}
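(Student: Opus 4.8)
\begin{IEEEproof}
The plan is to establish the three claims in turn: (i)~the matrices $B'_1,\dots,B'_{k'+1}$ satisfy the MDS property, so they describe a genuine RAID-6 code; (ii)~they satisfy (P1) and (P2); and (iii)~together with the sets $R'_i,C'_i$ they meet the two conditions of Theorem~\ref{thm:optimalRepair}, which yields optimal repair disk I/O for every disk. The one computational fact I would lean on throughout is that, over $\mathbb{F}_2$ and writing $A_i=B_i+B_{k+1}$, we have $B'_i=\left[\begin{smallmatrix}A_i&0\\0&A_i\end{smallmatrix}\right]$ for $i\in[k'-1]$, while $B'_{k'}=\left[\begin{smallmatrix}0&I\\0&0\end{smallmatrix}\right]$ and $B'_{k'+1}=\left[\begin{smallmatrix}0&0\\I&0\end{smallmatrix}\right]$; every assertion then reduces to $2\times 2$ block arithmetic.

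For (i) I would verify that $B'_i+B'_j$ is non-singular for every $i\ne j$ in $[k'+1]$, splitting into four cases: both indices in $[k'-1]$, where the sum is block-diagonal with diagonal blocks $A_i+A_j$, non-singular because the input code is MDS; exactly one index equal to $k'$ (resp.\ $k'+1$), where the sum is block upper- (resp.\ lower-) triangular with diagonal blocks $A_i$, again non-singular; and $\{i,j\}=\{k',k'+1\}$, where the sum is the permutation $\left[\begin{smallmatrix}0&I\\I&0\end{smallmatrix}\right]$. For (ii), (P2) is immediate from the definition of $R'_i=C'_i$, and (P1) holds because $B'_i=\mathrm{diag}(A_i,A_i)$ is non-singular whenever $A_i$ is. Neither part is delicate.

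The substance is (iii). Since the input code is repair-optimal with the strategy $R_i=C_i$, by Theorem~\ref{thm:optimalRepair} there is a $B_{k+1}$, with $B_i=A_i+B_{k+1}$, such that (1)~$B_i|_{R_i,\overline{R_i}}$ is non-singular for all $i\in[k+1]$ and (2)~$B_j|_{R_i,\overline{R_i}}=0$ for all $i\ne j$. Putting $j=k+1$ in (2) yields $A_i|_{R_i,\overline{R_i}}=B_i|_{R_i,\overline{R_i}}$ for $i\le k$, hence non-singular by (1); and for distinct $i,j\le k$, $A_j|_{R_i,\overline{R_i}}=B_j|_{R_i,\overline{R_i}}+B_{k+1}|_{R_i,\overline{R_i}}=0$. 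I would then check conditions (1) and (2) of Theorem~\ref{thm:optimalRepair} for the new code, taking $B'_{k'+1}$ as the free matrix. For an ``old'' disk $i\le k'-1$ the set $R'_i=C'_i$ selects rows $R_i$ of the top copy and rows $R_i+r$ of the bottom copy, so block-diagonality of $B'_j$ (for $j\le k'-1$) makes $B'_j|_{R'_i,\overline{C'_i}}$ equal to $\mathrm{diag}\big(A_j|_{R_i,\overline{R_i}},A_j|_{R_i,\overline{R_i}}\big)$, which is non-singular when $j=i$ and zero when $j\ne i$ by the preceding line. The remaining restrictions --- those of $B'_{k'}$ or $B'_{k'+1}$ onto any $R'_i$, and of any $B'_j$ onto $R'_{k'}=[r]$ or $R'_{k'+1}=[2r]\setminus[r]$ --- follow directly from the $0/I$ block forms and the disjointness $R_i\cap\overline{R_i}=\emptyset$: for instance the lone nonzero entry of $B'_{k'}$ in a row $a\le r$ lies in column $a+r$, and $a\in R_i$ forces $a\notin\overline{R_i}$, so $a+r\notin\overline{C'_i}$ and $B'_{k'}|_{R'_i,\overline{C'_i}}=0$; whereas $B'_{k'}|_{R'_{k'},\overline{C'_{k'}}}$ is precisely the $r\times r$ identity block (and similarly $B'_{k'+1}|_{R'_{k'+1},\overline{C'_{k'+1}}}$). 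Once (1) and (2) hold, Theorem~\ref{thm:optimalRepair} gives a repair of every basic disk of the new code using $|R'_i|=|C'_i|=r=(2r)/2$ blocks per surviving disk, which is optimal by Theorem~\ref{thm:nBlocksFromEachDisk}; the Q~disk is repaired optimally by reading all data blocks.

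The only real obstacle is the bookkeeping in step (iii): one has to keep straight how the \emph{doubled} index sets $R'_i$ interact with the block-diagonal old sub-matrices on one hand and with the two new ``swap'' sub-matrices on the other, and confirm that each of the handful of types of restriction $B'_j|_{R'_i,\overline{C'_i}}$ behaves as required. Conceptually this is nothing more than $\mathbb{F}_2$ block algebra plus $R_i\cap\overline{R_i}=\emptyset$; no idea beyond Theorem~\ref{thm:optimalRepair} is needed.
\end{IEEEproof}
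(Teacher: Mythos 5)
Your proposal is correct and follows essentially the same path as the paper's proof: verify the MDS property of $B'_i+B'_j$ by the four block-matrix cases, read off (P1) and (P2) directly from the construction, and invoke Theorem~\ref{thm:optimalRepair} by checking that the restrictions $B'_j|_{R'_i,\overline{C'_i}}$ are non-singular when $j=i$ and vanish when $j\neq i$, using $R_i=C_i$ and the block-diagonal/identity structure of the new sub-matrices. Your rewriting of the old code's repair conditions in terms of $A_j|_{R_i,\overline{R_i}}$ rather than $B_j|_{R_i,\overline{C_i}}$ is equivalent (since $B_{k+1}|_{R_i,\overline{R_i}}=0$ for $i\le k$) and changes nothing of substance.
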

\begin{proof}
We first prove the constructed code $B'_i, i\in [k'+1],$ is a RAID-6 code by showing that $B'_i + B'_j$ is non-singular for any $i,j \in [k'+1], i\neq j$. As the code $B_i$ is a RAID-6 code, $B_i + B_j$ has full rank for any $i,j \in [k+1], i\neq j$. The possible results of $B'_i + B'_j$ are $\left[\begin{array}{cc}B_i + B_j & 0 \\ 0 & B_i+B_j\end{array}\right]$, $\left[\begin{array}{cc}B_i + B_{k+1} & I_{r\times r} \\ 0 & B_i+B_{k+1}\end{array}\right]$, $\left[\begin{array}{cc}B_i + B_{k+1} & 0 \\ I_{r\times r} & B_i+B_{k+1}\end{array}\right]$, $\left[\begin{array}{cc}0 & I_{r\times r} \\ I_{r\times r} & 0\end{array}\right]$, which are all non-singular. {$\rule{0ex}{16pt}$} Therefore, the constructed code is a RAID-6 code.

Second, we use Theorem \ref{thm:optimalRepair} to show that the constructed code can recover from single disk failure with repair strategy $R'_i, C'_i, i\in [k'+1]$. According to the hypothesis, $R_i, C_i, i\in [k+1]$ is the optimal recover strategy for code $B_i$, we have $B_{k+1}|_{R_i, \overline{C_i}} = 0$ for $i\in [k]$ by Theorem \ref{thm:optimalRepair}. For $i,j \in [k]$,
\[
B'_j |_{R'_i, \overline{C'_i}} =
\left[\begin{array}{cc}B_j|_{R_i, \overline{C_i}} + B_{k+1}|_{R_i, \overline{C_i}} & 0 \\ 0 & B_j|_{R_i, \overline{C_i}} + B_{k+1}|_{R_i, \overline{C_i}}\end{array}\right]
= \left[\begin{array}{cc}B_j|_{R_i, \overline{C_i}} & 0 \\ 0 & B_j|_{R_i, \overline{C_i}} \end{array}\right]
\]
which is non-singular if $i=j$, and zero matrix otherwise. For $i \in [k'-1]$, as $R_i = C_i$, $I|_{R_i, \overline{C_i}}=0$. Thus $B'_{k+1}|_{R'_i, \overline{C'_i}} = \left[\begin{array}{cc}0 & I|_{R_i, \overline{C_i}}  \\ 0 & 0 \end{array}\right] = 0$. For the same reason, we have $B'_{k'+1}|_{R'_i, \overline{C'_i}} = 0$. Thus disk $D_i, i\in [k'-1]$ can be recovered by the strategy $R'_i,C'_i$. For the case of $i=k'$ and $i=k'+1$, we can see that $R'_i, C'_i$ satisfy the optimal repair conditions of Theorem \ref{thm:optimalRepair}.

Finally,  condition (P2) is directly satisfied from the construction, and condition (P1) is also satisfied, since for $i\in [k'-1]$, $B'_i = \left[\begin{array}{cc} B_i+B_{k+1} & 0 \\ 0 & B_i+B_{k+1} \end{array}\right]$ is non-singular according to the previous analysis.
\end{proof}

\begin{figure}
\centering
  \includegraphics[width=0.9\textwidth]{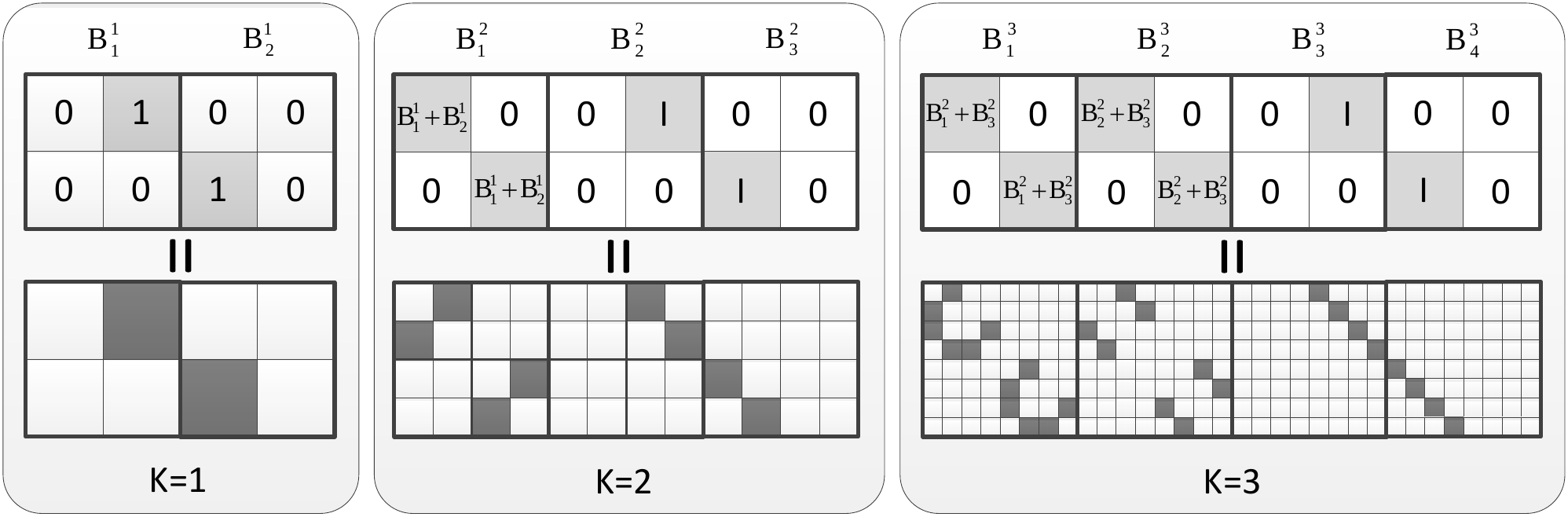}\\
  \caption{The construction of MDR codes.}\label{fig:fsr}
\end{figure}

In order to generate MDR codes with this approach, we need an initial repair-optimal RAID-6 code satisfying (P1) and (P2). The following $(k=1,r=2)$ RAID-6 code can be applied:
\begin{eqnarray*}
B_1 = \left[\begin{array}{cc}0 & 1\\ 0 & 0\end{array}\right] &\ ,\ & B_2 = \left[\begin{array}{cc}0 & 0\\ 1 & 0\end{array}\right]\\
R_1 = C_1 = \{1\} &\  ,\  & R_2 = C_2 = \{2\}
\end{eqnarray*}
We can verify that this code satisfies the requirement of our approach. The resulting MDR codes are shown in Fig.~\ref{fig:fsr} for the case $k=1,2,3$.

\section{Achieving the Minimum Encoding Overhead with MDR Codes}\label{sec:analysis}
Besides disk I/O overhead, another important metric of a RAID-6 code is coding complexity. In this section, we will show that we can achieve the minimum encoding complexity with MDR codes.

\subsection{Encoding Complexity}
\label{sec:encoding}
The encoding complexity considers the number of XORs to generate the coding disks P, Q. As the row parity disk P can be directly computed using $k-1$ XORs for each block, we only need to consider computing the Q disk. A direct way to compute the Q disk is to calculate $d_{k+2} = \sum_{i=1}^{k}A_id_i$. But with the knowledge of row parity disk, we may calculate the Q disk with fewer XOR operations.

We accomplish this in a recursive way. Let $y_t$ denote the number of XORs to calculate the Q disk in the case $k=t$. Let $B_i, i\in [t+1]$ represent the $(t-1, 2^{t-1})$ MDR code and $B'_i, i\in [t]$ represent the $(t, 2^{t})$ MDR code constructed from $B_i$. Let $d'_i, i\in[t+2]$ denote the blocks in disk $D_i$,
\begin{eqnarray*}
d'_{t+2} & = &
\sum_{i=1}^{t-1}\left[\begin{array}{cc}
B_i & 0 \\
0 & B_i \\
\end{array}
\right] d'_i + \left[\begin{array}{cc}
B_{t} & 0 \\
0 & B_{t} \\
\end{array}
\right] \sum_{i=1}^{t-1} d'_i + \left[\begin{array}{cc}
0 & I \\
0 & 0 \\
\end{array}
\right] d'_t + \left[\begin{array}{cc}
0 & 0 \\
I & 0 \\
\end{array}
\right] d'_{t+1}
\end{eqnarray*}
Carefully checking the above formula, we can see that calculating the upper half of the first two terms is equivalent to calculating the Q disk in the $(t-1, 2^{t-1})$ MDR code, with the upper half of $d'_i, i\in [t-1]$ as the data blocks. According to the induction hypothesis, if we know the corresponding P disk of the $(t-1, 2^{t-1})$ MDR code, we can calculate the first two terms with $2y_{t-1}$ XORs. Fortunately, we are able to acquire this information, {\em i.e.,} the value of $\sum_{i=1}^{t-1}d'_i$, in the calculation of the P disk of the $(t, 2^t)$ MDR code. Thus, we have
\[
y_{t} = 2 y_{t-1} + 2^{t-1} + 2^{t-1}
\]
Note that for the initial case $k=1$, blocks in the P disk and Q disk are replications of data blocks, which means $y_1 = 0$. Solving this recursive equation, we obtain $y_{t} = (t-1) 2^t$. As there are $r = 2^t$ blocks in the coding disk, the average number of XORs to compute one coded block is $k-1$, which is optimal \cite{liberationcodes}.

\begin{figure}
\centering
  \includegraphics[width=0.42\textwidth]{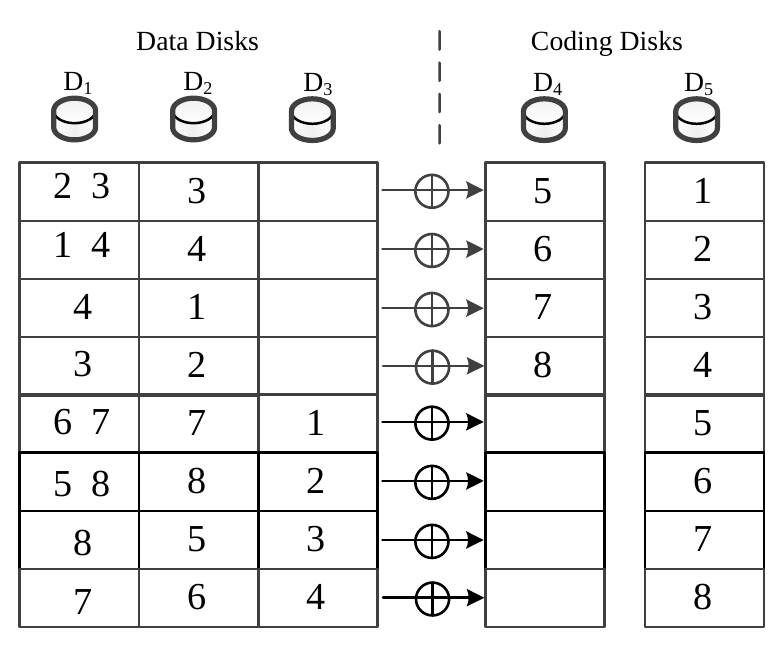}\\
  \caption{The MDR code with $k=3$. The Q disk is calculated as the parity of blocks containing the same number.}\label{fig:mdrcode}
\end{figure}

\noindent{\bf Example.} Consider the $(k=3,r=8)$ MDR code shown in Fig.~\ref{fig:mdrcode}. We use this example to explain how the Q disk can be calculated with $k-1=2$ XORs for each block. For the first block in the Q disk, we may directly compute it with $d_{5,1} = d_{1,2} + d_{2,3} + d_{3,5}$, which takes 2 XORs. For the third block in the Q disk, $d_{5,3} = d_{1,1} + d_{1,4} + d_{2,1} + d_{3,7}$. As we may catch the intermediate result $d_{1,1} + d_{2,1}$ in the computation of P disk, we can see that $d_{5,3}$ can also be computed with 2 XORs.

%


\subsection{Recovery Complexity}
The recovery complexity counts the average number of XORs to regenerate a failed block in a single disk failure recovery.
\begin{theorem}
With MDR codes, a failed basic disk can be recovered by $k-1$ XORs for computing each lost block.
\end{theorem}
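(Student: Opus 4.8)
I read the statement as: a failed basic disk $D_j$, $j\in[k+1]$, can be repaired with $(k-1)r$ bit-wise XORs in total, i.e.\ $k-1$ per lost block on average (consistent with the definition of recovery complexity given just above). The plan is to reduce this to the optimal-encoding analysis of Section~\ref{sec:encoding} by exploiting the recursive structure of MDR codes. First fix the repair strategy: by Theorems~\ref{thm:nBlocksFromEachDisk} and~\ref{thm:samerow} and conditions (P1)--(P2), we read rows $C_j$ of every surviving basic disk and rows $R_j=C_j$ of the Q disk, with $|C_j|=r/2$. The $r/2$ lost blocks on rows $C_j$ are recovered by row parity, each at $k-1$ XORs (the XOR of the $k$ surviving basic blocks of its row-parity set); I would group these additions so as to retain in memory the ``inner'' partial sums over the data disks of the smaller code used in the recursive construction. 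For the other $r/2$ blocks, on rows $\overline{C_j}$, the key lemma is that $B_j|_{R_j,\overline{C_j}}$ is a permutation matrix. I would prove this by induction along the construction: the base case is the $1\times 1$ matrix $[1]$; in the recursive step $B'_i|_{R'_i,\overline{C'_i}}$ is either block-diagonal with two copies of $B_i|_{R_i,\overline{C_i}}$ (using conditions~1) and~2) of Theorem~\ref{thm:optimalRepair} to discard the $B_{k+1}$ contribution on the diagonal) or an identity block. Hence each Q equation restricted to a read row $i'\in R_j$ has a single unknown $d_{j,\pi(i')}$ for a fixed permutation $\pi$, and over $\mathbb F_2$ it is solved just by XORing $d_{k+2,i'}$ with the data-block terms already appearing in the Q-encoding formula for $d_{k+2,i'}$; so recovering $d_{j,\pi(i')}$ costs exactly as many XORs as encoding $d_{k+2,i'}$.

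The count then goes by induction on $k$, mirroring the recursion $y_t=2y_{t-1}+2^{t-1}+2^{t-1}$ of Section~\ref{sec:encoding}. For $k=1$ the P and Q disks are copies of the single data disk, so $0$ XORs suffice. For the step $(k,r)\to(k+1,2r)$: if $D'_i$ with $i\in[k]$ fails, the strategy $(R'_i,C'_i)$ respects the block-diagonal form of $B'_i$, so the repair splits into an upper-half and a lower-half sub-repair, each isomorphic to repairing $D_i$ in the $(k,r)$ code --- cost $2(k-1)r$ by induction --- plus one extra XOR per lost block (one extra surviving basic block in each row-parity recovery, one extra term from the two new identity-block disks in each Q-recovery), i.e.\ $2r$ more, for a total of $2(k-1)r+2r=k\cdot 2r$. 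If $i\in\{k+1,k+2\}$, then $B'_i$ is an identity block; I would recover the $r$ row-parity blocks first, which along the way produces the row parity $\sum_{l\le k}u_l$ (resp.\ $\sum_{l\le k}v_l$) of the $(k,r)$ MDR code over the upper halves $u_l$ (resp.\ lower halves $v_l$) of the data disks, and then invoke the encoding result of Section~\ref{sec:encoding} to compute the corresponding $(k,r)$-code Q disk with $(k-1)r$ XORs; solving the $r$ Q equations for the missing half adds $r$ XORs, which together with the $kr$ XORs of the row-parity step again gives $k\cdot 2r=(k'-1)\cdot 2r$. In all cases the $(k+1,2r)$ code repairs any basic disk with $(k'-1)\cdot 2r$ XORs, closing the induction, so the $(k,2^k)$ MDR code repairs any basic disk with $(k-1)2^k$ XORs.

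The hard part is the Q-recovery half. A single Q parity set can involve up to $k+1$ data blocks, so a lost block recovered from it may by itself need $k$ XORs --- one more than the target --- and the deficit is recouped only by sharing intermediate partial sums between the row-parity step and the Q step, precisely the reuse of P-disk results that drives the optimal encoding complexity in Section~\ref{sec:encoding}. Making this rigorous needs (i) the permutation lemma, which forces the Q-recovery formulas to agree term-for-term with the Q-encoding formulas so that the Section~\ref{sec:encoding} accounting transports, and (ii) a careful check that in each branch of the recursion the partial sums required for the reuse are actually produced while recovering the row-parity blocks and that the XORs thus saved exactly cancel the surplus. That bookkeeping, rather than any single clever step, is the technical core.
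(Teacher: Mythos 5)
Your proposal takes essentially the same approach as the paper: the same $(k,r)\to(k+1,2r)$ recursive decomposition of the repair into two half-size sub-repairs (for a block-diagonal $B'_i$) or a row-parity step followed by a $(k,r)$ Q-encoding (for the two new off-diagonal disks), the same key structural observation that $B_i|_{R_i,\overline{C_i}}$ is a permutation --- the paper actually shows it is exactly the identity, which holds in the base case and is preserved by the construction --- and the same reuse of the row-parity intermediate sums to transport the Section~\ref{sec:encoding} XOR accounting. The bookkeeping you identify as the technical core is handled in the paper by the explicit recurrence $y_t = 2y_{t-1}+2^{t-1}$ on the XOR cost of $\sum_j B_j|_{R_i,C_i}d_j|_{C_i}$ (anchored at the $i=k+1$ case via the encoding analysis), which yields $y_k=(k-2)2^{k-1}$ and hence $(k-1)2^{k-1}$ XORs for the $\overline{C_i}$ half, matching your count.
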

\begin{IEEEproof}
Please refer to the appendix for the proof.
\end{IEEEproof}

Note that if only the Q disk fails, we are currently unable to recover it with $k-1$ XORs, since applying the method in Sec.~\ref{sec:encoding} requires computing the P disk at the same time, which results in $2(k-1)$ XORs for rebuilding each block of the Q disk.

\section{Discussions}\label{sec:arraysize}
We have seen in previous sections that MDR codes achieve optimal repair disk I/O and encoding complexity. To achieve this, however, we need a large update disk I/O and a large strip size, which will be discussed in this section.

\subsection{Update Disk I/O}
Update disk I/O refers to the number of parity blocks that are affected by changing the content of a data block. As the number may vary for updating different data blocks, we focus on the average value here. In terms of the parity-check matrix $H$, the update disk I/O equals the average number of ones in each column of $H$.

For a ($k, 2^{k}$) MDR code, let $x_k$ denote the average number of ones in each column of matrices $A_i = B_i + B_{k+1}, i\in [k]$. Then $k2^{k}x_k$ is the total number of 1's in the matrix $[A_1 \  A_2 \ \cdots \ A_k]$. According to the construction in Sec.~\ref{sec:construction}, we have
\[
k2^{k}x_{k} = 2(k-1) 2^{k-1} x_{k-1} + (k+1) 2^{k-1}
\]
And for the case of $k=1$, we have $x_1=1$. Solving this recursive equation,
\begin{eqnarray*}
k x_{k} & = & (k-1) x_{k-1} + \frac{k + 1}{2}\\
& = & (2-1)x_1 + \sum_{i=2}^{k} \frac{i+1}{2} = \frac{1}{2}( k(k+1)/2 + k)
\end{eqnarray*}
we obtain $x_k = (k+3)/4$ for $k\geq 2$. Note that $x_k$ is actually the average number of parity blocks in the Q disk that are affected by the updating. As there is a row parity block affected as well, the update disk I/O of MDR codes is $x_k + 1 = (k+7)/4$.

\subsection{Strip Size}
Here we use the term ``strip size'' to denote the number of rows in a RAID-6 array code. According our construction, the MDR codes have strip size $2^{k}$. The other two repair-optimal codes, Zigzag codes \cite{zigzag} and En Gad's codes\cite{repairGF2}, have strip size $2^{k-1}$. Tamo {\em et al.} proved that, in order to achieve both optimal repair and optimal update at the same time, the minimum strip size is $2^{k-1}$ \cite{zigzag}. If the assumption of optimal updating is dropped, however, the minimum strip size for optimal repair is unknown yet. Inspired by the works \cite{zigzag,repairGF2}, we carried out a brute-force search (based on Theorem \ref{thm:optimalRepair}) for the minimum strip size and verified that the minimum strip size is indeed $2^{k-1}$ for $2\leq k\leq 4$. For $k=5$, the strip size is no less than $2^{k-1}-2 =14$. We conjecture that for a repair-optimal RAID-6 code over $\mathbb{F}_2$, the minimum strip size is at least exponential to the number of data disks $k$.

In practice, the impact of large strip sizes is that we need a large memory to cache these blocks during the repair process. We note that, for a $(k,r=2^{k})$ MDR code, we may carry out the repair by caching only $r/2 + 2$ blocks, because $r/2$ lost blocks are recovered by row parity, which can be computed one by one, with one block for the intermediate result and another for the current block read from the disk. Assume the block size is set to 512 Bytes \cite{EMC}, the memory overhead of MDR codes with 16 data disks is about $512$B $\times 2^{16}/ 2 =  16$MB, which is acceptable.

\section{Simulation}\label{sec:simulation}
\subsection{Simulation Setup}
To evaluate the performance of MDR codes, we use the popular disk simulator Disksim \cite{disksim} to simulate the recovery process of a RAID-6 system.

In the simulation, we set up 10 disks in all, which are connected to an interleaved I/O bus. Our simulation module acts as an I/O driver of a RAID-6 system, which directly generates I/O requests and handles the request completion interrupts from each individual disk. The logical layout is rotated among stripes of the disk array. As I/O tasks on different disks can be executed simultaneously, we pipeline the recovery of sequential stripes, {\em i.e.}, we write the recovered blocks of the last stripe at the same time of reading blocks of the current stripe.
To simulate the workload during an online repair, we also generate random I/O requests to surviving disks.

We implement three different recovery algorithms: the conventional recovery algorithm that uses the row parity, the RDOR \cite{OptimalRDPXiang} recovery algorithm for the RDP code and our recovery algorithm for the MDR codes. Performance of the conventional recovery algorithm is used as a benchmark, so that we use the performance ratio of the other recovery algorithms to the conventional algorithm to measure the improvements.

Two metrics are tested. One is recovery time, and the other is the average access time of the surviving disks, which is measured as the sum of the response time of each I/O request and thus reflects the load on each disk.

\subsection{Impact of Strip Size}
We vary the block size from 512B to 8KB, so that the strip size changes from 32KB to 512KB. Simulations are carried out with 8 disks in the array, and both online repair mode and offline repair mode are tested.

Fig.~\ref{fig:resultatgroup} shows the average access time with different strip sizes. As disk access time of the recovery process is not affected by I/O requests of the other application process, we only show the result of online repair mode. We can see that, compared with the conventional repair algorithm, the RDOR algorithm reduces access time to about 80\% $\sim$ 85\%, and our codes further reduce the access time to 65\% $\sim$ 70\%. As strip size grows, access times decrease for all the three recovery algorithms due to sequential reads, but the improvement ratio does not exhibit any characterizable trend.

\begin{figure}
\centering
\subfigure[]{\includegraphics[width=5.64cm,height=4.7cm]{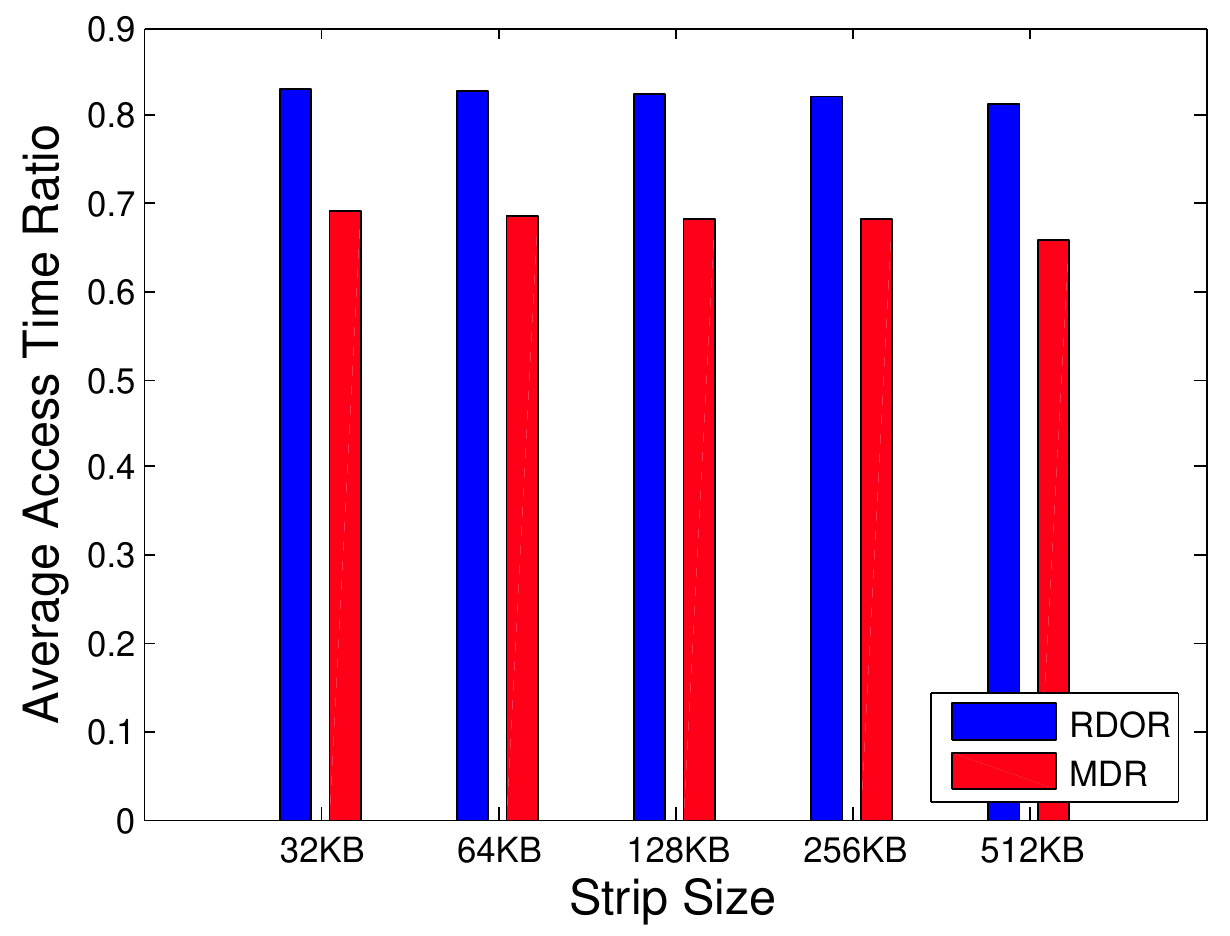}\label{fig:resultatgroup}}
\subfigure[]{\includegraphics[width=5.64cm,height=4.7cm]{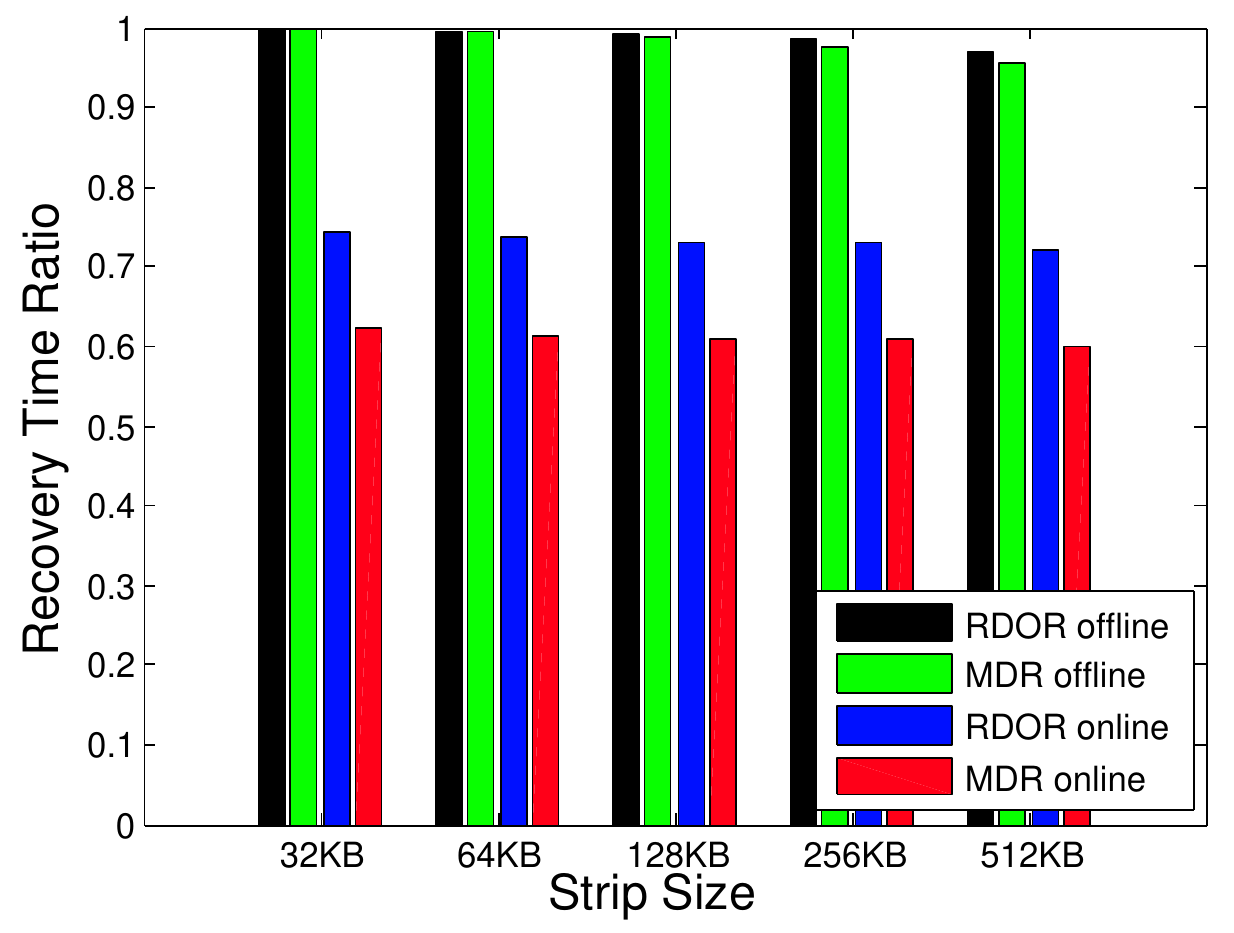}\label{fig:resultgroup}} \\
\subfigure[]{\includegraphics[width=5.64cm,height=4.7cm]{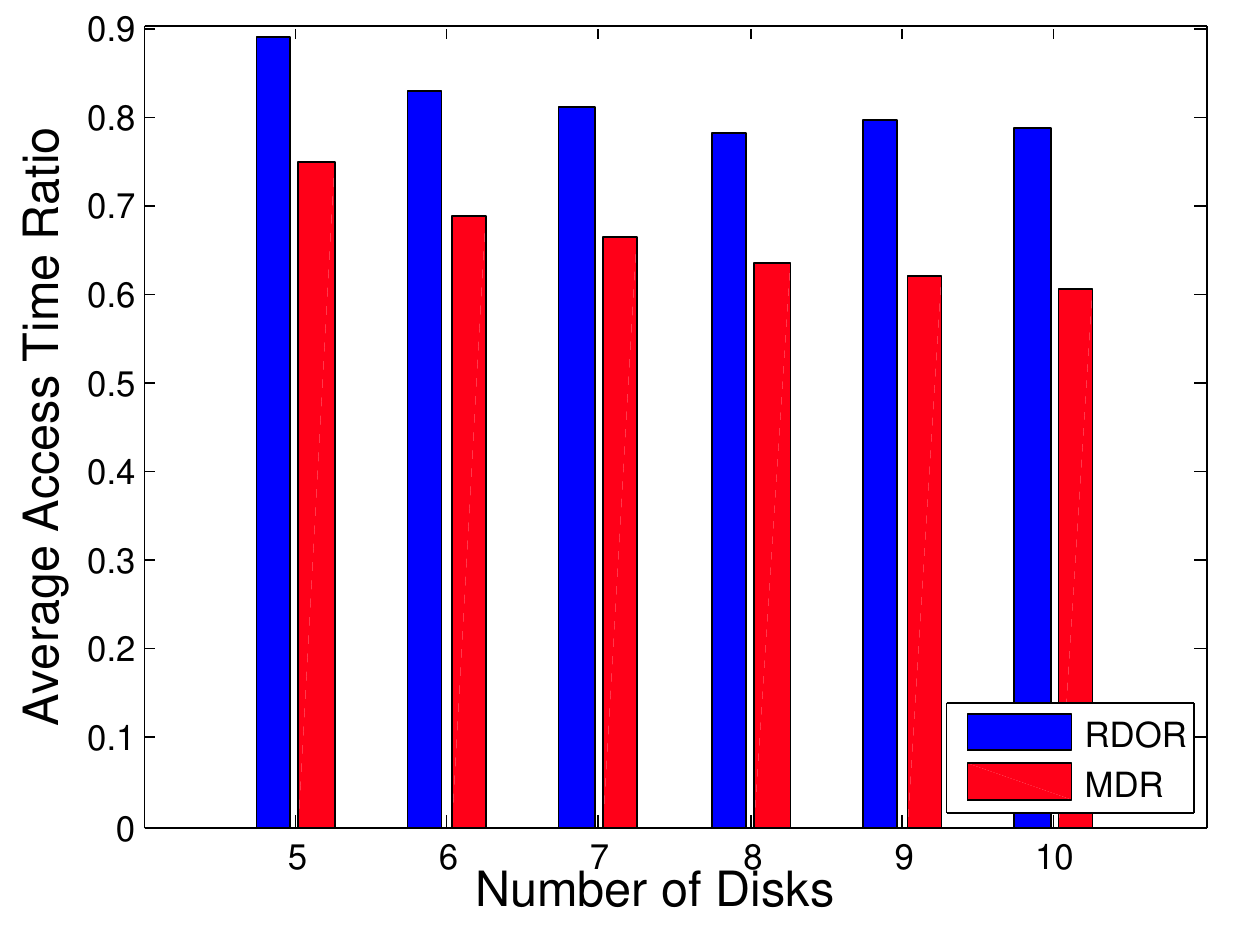}\label{fig:resultatk}}
\subfigure[]{\includegraphics[width=5.64cm,height=4.7cm]{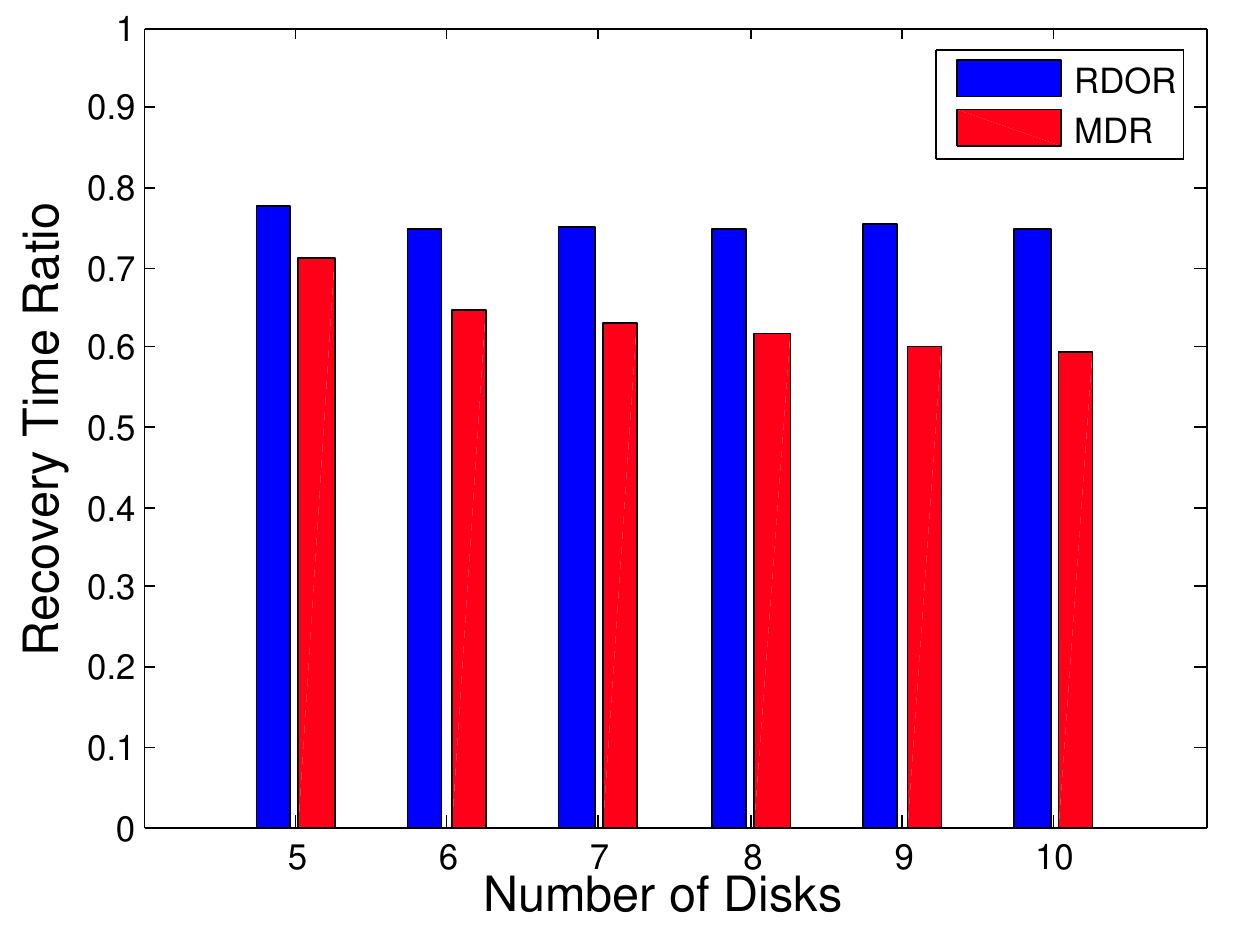}\label{fig:resultk}}
\caption{Simulation results.}
\end{figure}


Fig.~\ref{fig:resultgroup} shows the recovery time with different strip sizes. When the recovery process is carried out in the off-line mode, both the RDOR and our recovery algorithm can hardly reduce the recovery time. This is because we pipeline the disk read and write, and multiple disk reads at different disks can be executed simultaneously. Hence disk write becomes the bottleneck. The recovery time ratio decreases slightly as strip size grows. This is because we can not pipeline for recovering the first strip, whose recovery time is reduced by reading less blocks from the surviving disks.

\subsection{Impact of the Number of Disks}
We increase the number of disks from 5 to 10 to evaluate its impact on the performance of these recovery methods.

Fig.~\ref{fig:resultatk} shows the online average access time with different number of disks.
As the number of disks grows, the access time of RDOR does not show a tendency, while the access time of our codes decreases. This is because the ratio of read disk reads by RDOR depends on the smallest prime that is larger than the number of systematic disks $k$ and at the proper primes, the ratio approaches 75\% as $k$ increases. On the other hand, our ratio approaches 50\% as $k$ increases. For the recovery time,  Fig.~\ref{fig:resultk} shows a similar result.
%

\section{Conclusion}\label{sec:conclusion}
We studied the problem of minimizing disk I/O for every single disk repair in a RAID-6 system, including not only the repair of a data disk but also the repair of a coding disk. We solved this problem by proving a lower bound on the minimum repair disk I/O and constructing the MDR codes that achieve this bound. We also showed that the MDR codes achieve the minimum computational overhead among all RAID-6 codes when the P disk is calculated at the same time. The construction approach is a generic one, which may be used to generate new repair-optimal RAID-6 codes with different initial codes. The main drawback of MDR codes is that its strip size is $2^k$, which is twice as much as that of Zigzag codes and En Gad's codes. Inspired by these codes, we recently modified the construction method for  MDR codes and constructed a new family of RAID-6 codes with strip size $2^{k-1}$, which achieve the optimal repair disk I/O in the repair of every disk and can be encoded with $k$ XORs per each coding block.

We implemented our codes and tested their performance through simulations. Results show that our codes can efficiently reduce the reading overhead of surviving disks to about half of that in the conventional way, and the total recovery time can be reduced by up to 40\%.

\section*{Acknowledgments}
We thank the anonymous JSAC reviewers for their feedback and comments. We thank Jun Li, Zongpeng Li, Ziyu Shao, Yucheng He, and Xiao Ma for helping us improve the writing.

{
\bibliographystyle{IEEEtran}
\bibliography{IEEEabrv,sigcommref}
}

\appendix
\subsection{Proof of Theorem 4}
\begin{IEEEproof}[Proof]
For the ``if'' part: Consider the repair of disk $D_i, i\in [k+1]$. As we read the $C_i$ rows of surviving basic disk, we can rebuild the $C_i$ blocks of $D_i$ with row parity. For a $r$-dimension column vector $d$ and a index set $C\subset [d]$, we use $d|_C$ to denote the column vector formed by the $C$ elements of $d$.  According to the definition of $B_j, j\in[k+1]$, we have $d_{k+2} = \sum_{j=1}^{k+1} B_j d_j$. Consider the $R_i$ rows of this equation:
\[
d_{k+2} |_{R_i} \ =\  \sum_{j=1}^{k+1} ( B_j |_{R_i, C_i} d_j|_{C_i} + B_j |_{R_i, \overline{C_i}} d_j|_{\overline{C_i}})
\   = \   B_i |_{R_i, \overline{C_i}} d_i|_{\overline{C_i}} + \sum_{j=1}^{k+1} B_j |_{R_i, C_i} d_j|_{C_i}
\]
where the second equality is because $B_j |_{R_i, \overline{C_i}} = 0$ for $j\neq i$. As the blocks $d_{k+2}|_{R_i}$, $d_j|_{C_i}$ are read (or calculated by row parity) and $B_i |_{R_i, \overline{C_i}}$ is non-singular, we can see that blocks $d_i|_{\overline{C_i}}$ can be recovered.

For the ``only if'' part: Assume there is a $(k,r)$ repair-optimal RAID-6 code described by the generator sub-matrices $A_1, A_2, \cdots, A_k$ with repair strategy $R_i, C_i, i\in[k+1]$. Let $A_{k+1} = 0$ for simplicity.
As disk $D_i$ can be recovered by reading $R_i$ blocks from the Q disk and $C_i$ blocks from each surviving basic disk, $d_i$ can be represented by these read blocks. Similar to the proof of Theorem \ref{thm:minIOfml}, as every representation is derived as a linear combination of rows of equation (\ref{eqn:basic}), we can use matrix $[Y Z]$ to denote the linear combination such that $Y+ZA_i = I$ and $d_i$ is solved through the equation
$
\sum_{j=1}^{k+1} (Y+ZA_j)d_j + Z d_{k+2} = 0
$.
Let $E$ denote the universal set $\{1,2,\cdots, r \}$. In this equation, the coefficients of unread blocks must be zero, which means $(Y+ZA_j)|_{E,\overline{C_i}} = 0$ for $j \in [k+1],  j\neq i$, and $Z|_{E,\overline{R_i}} = 0$. Subsequently, for $j\in[k+1], j\neq i$
\begin{eqnarray*}
(Y+ZA_j)|_{E,\overline{C_i}}&=& Y|_{E,\overline{C_i}} + Z|_{E,\overline{R_i}} A_j|_{\overline{R_i},\overline{C_i}} + Z|_{E,R_i} A_j|_{R_i,\overline{C_i}} \\
& =  &Y|_{E,\overline{C_i}} + Z|_{E,R_i} A_j|_{R_i,\overline{C_i}} = 0
\end{eqnarray*}
Therefore, $Z|_{E,R_i} A_j|_{R_i,\overline{C_i}} = Y|_{E,\overline{C_i}} = (I+ZA_i)|_{E,\overline{C_i}}$.
Consider the $\overline{C_i}$ rows of the this equation,
\begin{eqnarray*}
Z|_{\overline{C_i},R_i} A_j|_{R_i,\overline{C_i}} & = & I|_{\overline{C_i},\overline{C}} + Z|_{\overline{C_i},E} A_i|_{E,\overline{C_i}} \\
& = & I_{r/2 \times r/2} + Z|_{\overline{C_i},R_i} A_i|_{R_i,\overline{C_i}} + Z|_{\overline{C_i},\overline{R_i}} A_i|_{\overline{R_i},\overline{C_i}} \\
& = & I_{r/2 \times r/2} + Z|_{\overline{C_i},R_i} A_i|_{R_i,\overline{C_i}}
\end{eqnarray*}
where the last equality is because $Z|_{E,\overline{R_i}} = 0$. Namely, we obtain that for $j\in [k+1], j\neq i$,
\[
Z|_{\overline{C_i},R_i} (A_j+A_i)|_{R_i,\overline{C_i}} = I_{r/2 \times r/2}
\]
which means the sub-matrix $(A_j+A_i)|_{R_i,\overline{C_i}}$ are invertible and the same for $j\in [k+1], j\neq i$. Note that $A_{k+1} = 0$. When $i\neq k+1$, we have $(A_j+A_i)|_{R_i,\overline{C_i}} = (A_{k+1}+A_i)|_{R_i,\overline{C_i}} = A_i|_{R_i,\overline{C_i}}$ for $j\in [k], j\neq i$, which implies $A_j |_{R_i, \overline{C_i}} = 0$ and $A_i |_{R_i, \overline{C_i}}$ is invertible. When $i=k+1$, we have $A_1|_{R_{k+1},\overline{C_{k+1}}} = A_2|_{R_{k+1},\overline{C_{k+1}}} = \cdots = A_k|_{R_{k+1},\overline{C_{k+1}}}$ is invertible.

We define the $r$-by-$r$ matrix $B_{k+1}$ as $B_{k+1}|_{R_{k+1}, \overline{C_{k+1}}} = A_1|_{R_{k+1},\overline{C_{k+1}}}$ and the other parts of $B_{k+1}$ are all zero, {\em i.e.}, $B_{k+1}|_{\overline{R_{k+1}}, E} = 0$ and $B_{k+1}|_{E, C_{k+1}} = 0$. Let $B_i = A_i + B_{k+1}$, for $i\in[k]$.
Note that for $i\in [k]$, $B_{k+1}|_{R_{i}, \overline{C_{i}}} = 0$ since otherwise $A_1|_{R_{i}, \overline{C_{i}}} = A_2|_{R_{i}, \overline{C_{i}}} \neq 0$, conflicting the fact $A_j|_{R_i, \overline{C_i}} = 0$ for $i\neq j$.
We now verify the series of matrix $B_i$ satisfy the two properties.

Property 1) holds because $B_{k+1}|_{R_{k+1}, \overline{C_{k+1}}} = A_1|_{R_{k+1},\overline{C_{k+1}}}$ is non-singular and for each $i\in [k]$, $B_i |_{R_i, \overline{C_i}} = A_i |_{R_i, \overline{C_i}} + B_{k+1}|_{R_i, \overline{C_i}} = A_i |_{R_i, \overline{C_i}}$ is non-singular.

Property 2) holds because for the case $i=k+1$, $j\in [k]$, $B_j|_{R_{k+1}, \overline{C_{k+1}}} = A_j |_{R_{k+1}, \overline{C_{k+1}}} + A_1|_{R_{k+1}, \overline{C_{k+1}}} = 0$; for the case $i\in [k]$, $j=k+1$, $B_{k+1}|_{R_{i}, \overline{C_{i}}} = 0$; for the case $i,j\in[k], i\neq j$, $B_{j}|_{R_{i}, \overline{C_{i}}} = A_j |_{R_{i}, \overline{C_{i}}} + B_{k+1}|_{R_{i}, \overline{C_{i}}} = 0$.

\end{IEEEproof}

\subsection{Proof of Theorem 6}
\begin{IEEEproof}[Proof]
For repairing a basic disk $D_i, i \in [k+1]$ with the MDR codes, we can use the row parity to compute block $d_{i,j}$ for $j\in C_i$, which needs $k-1$ XORs. So we only need to consider the case of computing a block $d_{i,j}, j\notin C_i$. Let $d_i|_{\overline{C}_i}$ denote the column vector composed of these blocks.

As shown in the proof of Theorem \ref{thm:optimalRepair}, we use the following equation to calculate $d_i|_{\overline{C_i}}$:
\[
d_{k+2} |_{R_i} = B_i |_{R_i, \overline{C_i}} d_i|_{\overline{C_i}} + \sum_{j=1}^{k+1} B_j |_{R_i, C_i} d_j|_{C_i}
\]
According to the construction of MDR codes, we can see that $B_i|_{C_i,\overline{C}_i} = I$ holds for the initial case and is preserved in the construction. Let $B''_j, j\in [k]$ be the $(k-1, 2^{k-1})$ MDR codes, $R''_j, C''_j$ be the corresponding repair strategy. Let $y_k$ denote the number of XORs to compute $\sum_{j=1}^{k+1} B_j |_{R_i, C_i} d_j|_{C_i}$. If $i=k+1$,
\[
\sum_{j=1}^{k+1} B_j |_{R_i, C_i} d_j|_{C_i} = \sum_{j=1}^{k-1} B''_j d_j|_{C_i} +  B''_k \sum_{j=1}^{k-1} d_j|_{C_i}
\]
which is equivalent to calculating the Q disk of the $(k-1, 2^{k-1})$ MDR codes, which takes $(k-2)2^{k-1}$ XORs according to the analysis in Sec.~\ref{sec:encoding}. If $i<k+1$,
\begin{eqnarray*}
\sum_{j=1}^{k+1} B_j |_{R_i, C_i} d_j|_{C_i} &=& \sum_{j=1}^{k-1} \left[\begin{array}{cc}B''_j |_{R''_i, C''_i}&0\\0 & B''_j |_{R''_i, C''_i}\end{array}\right] d_j|_{C_i} + \left[\begin{array}{cc}B''_k |_{R''_i, C''_i}&0\\0 & B''_k |_{R''_i, C''_i}\end{array}\right] \sum_{j=1}^{k-1} d_j|_{C_i}\\
& & + \left[\begin{array}{cc}0& I_{r/2 \times r/2}\\ 0 & 0\end{array}\right]d_{k}|_{C_i} + \left[\begin{array}{cc}0&0\\I_{r/2 \times r/2} & 0\end{array}\right]d_{k+1}|_{C_i}
\end{eqnarray*}
Note that the first two terms can be calculated recursively with $2y_{k-1}$ XORs. Thus,
$y_k = 2y_{k-1} + 2^{k-1}$. Combining the two cases, we have $y_k = (k-2)2^{k-1}$. As $d_i|_{\overline{C_i}} = d_{k+2} |_{R_i} + \sum_{j=1}^{k+1} B_j |_{R_i, C_i} d_j|_{C_i}$, we can see that $d_i|_{\overline{C_i}}$ can be calculate with $y_k + 2^{k-1} = (k-1)2^{k-1}$ XORs. As there are $2^{k-1}$ blocks in $d_i|_{\overline{C_i}}$, the average number of XORs to repair each block is $k-1$.
\end{IEEEproof}

\end{document}